\documentclass[12pt]{article}
\usepackage[a4paper,margin=2cm]{geometry}
\usepackage{amsmath,amssymb,amsthm,mathtools,bm}
\usepackage{booktabs,array,multirow,longtable}
\usepackage{graphicx}
\usepackage{tikz}
\usepackage{hyperref}
\usepackage{xcolor}
\usepackage{enumitem}
\usepackage{caption}
\usepackage{subcaption}
\usepackage{float}

\usetikzlibrary{arrows.meta,calc,positioning,decorations.pathreplacing}

\hypersetup{colorlinks=true,linkcolor=blue,citecolor=blue,urlcolor=blue}

\title{Financial Relativity: An Information-Geometric Interpretation of Asset Pricing}
\author{Lin Li}
\date{}

\theoremstyle{plain}
\newtheorem{theorem}{Theorem}[section]

\newtheorem{proposition}[theorem]{Proposition}
\newtheorem{definition}{Definition}[section]
\theoremstyle{remark}

\begin{document}
\maketitle

\begin{center}
\vspace{1mm}
\small{Department of Finance, Business School, East China University of Science and Technology}
\end{center}

\renewcommand{\refname}{{Reference}}
\renewcommand{\tablename}{Table}
\renewcommand{\figurename}{{\bf Figure}}
\vspace{1mm}

\vspace{0.2cm}
\begin{center}
\parbox{\textwidth}{\setlength{\parindent}{0em}
\vspace{3.2mm}

\normalsize{\textbf{Abstract}: {\it
Classical asset pricing relies on the risk-neutral measure $Q$ for valuation, yet its economic meaning is typically grounded in a physical measure $P$. This creates an implicit asymmetry: pricing is governed by $Q$, while interpretation is anchored in $P$. As a result, the same pricing relation may admit multiple, and sometimes conflicting, explanations depending on how the link between $P$ and $Q$ is specified, making it difficult to provide a unified account of asset pricing within a single conceptual framework.
This paper proposes an alternative perspective based on information geometry, termed \emph{financial relativity}. The central principle is a relativity of probabilistic reference frames: $P$ and $Q$ possess no intrinsic hierarchy, but instead correspond to geometric structures induced by different informational constraints. Terminal structural information determines how probability geometry is shaped, and this geometry in turn governs how information is expressed in prices.
Within this framework, the risk-neutral measure is reinterpreted as a posterior probability geometry shaped by structural constraints. Asset prices are characterized as geometric projections of terminal payoffs onto information subspaces, and price dynamics emerge as the progressive manifestation of structural information under evolving geometry. We develop discrete and continuous financial field equations to describe the formation of probability geometry, and derive geodesic price dynamics in which volatility is endogenously determined by posterior uncertainty.
This framework provides a unified explanation for price fluctuations, event-driven dynamics, and risk premia, and yields testable implications, including structural links between volatility and posterior variance, as well as measures of the informational efficiency of prices. By integrating structural information, probability measures, and price dynamics within a unified geometric framework, the paper offers a coherent, extensible, and empirically tractable reinterpretation of asset pricing theory.}
 \vspace{3mm}

\textbf{Key words:} Financial Relativity; Asset Pricing; Risk-Neutral Measure; Information Geometry; Posterior Probability}}
\vspace{2mm}
\end{center}

\pagebreak

\section{Introduction}

Modern asset pricing theory has achieved remarkable formal success. Whether in classical no-arbitrage pricing or in continuous-time risk-neutral dynamics, almost all core results can ultimately be expressed in a unified form: prices equal the discounted conditional expectation of future payoffs. Since the seminal work of Black and Scholes \cite{BlackScholes1973} and Merton \cite{Merton1973}, this representation has found its most influential application in derivative pricing. Subsequently, the theory of no-arbitrage and equivalent martingale measures provided a rigorous probabilistic foundation for this formulation \cite{HarrisonKreps1979,HarrisonPliska1981}. The Cox–Ingersoll–Ross framework \cite{CoxIngersollRoss1985} further demonstrated that the risk-neutral probability is not merely a technical device in option pricing, but a general structural feature of modern continuous-time asset pricing. This leads to a clear and standard formulation: as long as there is no arbitrage, there exists a risk-neutral measure $Q$ under which asset prices can be written as discounted conditional expectations of future cash flows \cite{DelbaenSchachermayer2006}.

However, precisely within this highly successful theoretical structure, a long-standing and unresolved question remains: what is the economic meaning of $Q$?

In the classical no-arbitrage framework, $Q$ is often treated merely as a mathematical object arising from an existence theorem. In consumption-based or representative-agent models, it is typically interpreted as a transformation of the physical probability $P$ under marginal utility weights \cite{Lucas1978,HansenJagannathan1991,Cochrane2005}. Because the reciprocal of the Radon–Nikodým derivative corresponds exactly to the stochastic discount factor (or pricing kernel), this interpretation ultimately links the transformation from $P$ to $Q$ to the economic meaning of the stochastic discount factor. A substantial body of modern asset pricing theory and empirical work has indeed been developed around the properties of the stochastic discount factor \cite{HansenJagannathan1991,Campbell2000,Cochrane2005}. In this sense, $Q$ is no longer merely a technical construct, but becomes one of the core probabilistic structures through which prices and uncertainty are connected.

Strictly speaking, even the behavioral finance literature does not fundamentally depart from this framework. Rather than abandoning $Q$-based pricing, it reshapes the structure of the stochastic discount factor through mechanisms such as sentiment, loss aversion, framing effects, probability weighting, or heterogeneous beliefs \cite{Shefrin2005,Shefrin2008,BarberisHuangSantos2001,BarberisHuang2008,PolkovnichenkoZhao2013}. In this sense, behavioral deviations are not a paradigm opposing $Q$-based pricing, but can instead be understood as selecting a different $Q$ modified by cognitive or psychological weighting. Shefrin explicitly argues that the most promising direction in behavioral asset pricing is not to abandon the stochastic discount factor, but to incorporate sentiment, heterogeneous beliefs, and behavioral biases into a unified pricing kernel framework \cite{Shefrin2008}. Polkovnichenko and Zhao \cite{PolkovnichenkoZhao2013} further show that even in empirical pricing kernels extracted from option markets, probability weighting can enter the risk-neutral distribution through its effect on the shape of the pricing kernel. Thus, the issue is not whether $Q$ exists, but how different versions of $Q$ should be understood in terms of their status, origin, and mutual relationships.

Despite its depth, this line of interpretation remains unsatisfactory in several respects. First, it implicitly assumes that the market—an aggregate system composed of heterogeneous agents—can be represented by a single entity with stable preferences and beliefs (even if those beliefs are biased), capable of making consistent decisions. Second, it presumes that a prior “more fundamental” probability $P$ exists and is captured by the market, while $Q$ is treated as a derivative object shaped by preferences and cognition. Third, in empirical applications, preference structures are typically difficult to observe directly, and the identification of risk aversion parameters and pricing kernels often relies on strong assumptions. In particular, following Ross’s Recovery Theorem \cite{Ross2015Recovery}, which reignited the question of whether $P$ can be recovered from $Q$, it becomes even more important to ask which probability measure should be regarded as operationally relevant for the market. Subsequent research has shown that recovering $P$ requires additional structural assumptions that are far from trivial \cite{BorovickaHansenScheinkman2016,BakshiChabiYoGao2018}. In other words, while the Recovery problem is framed as reconstructing the “true” probability from the risk-neutral measure, the deeper question is why $P$ should be regarded as more fundamental than $Q$ in the first place. This priority may reflect a theoretical convention rather than a necessary property of the market.

This paper approaches the problem from a different perspective. It does not challenge the mathematical validity of no-arbitrage pricing, nor does it deny the importance of behavioral biases, risk aversion, or learning mechanisms at the individual level. Rather, it questions whether, for a market composed of interacting agents, trading systems, institutional constraints, and financial technologies, it remains appropriate to interpret it as a “super-agent” characterized by stable preferences. This is not merely a semantic issue. As Kirman \cite{Kirman1992} has argued, the mapping from heterogeneous individuals to aggregate demand does not preserve individual preference structures. Moreover, in asset price dynamics, heterogeneous beliefs, incomplete information, and evolutionary feedback can generate behaviors far more complex than those implied by individual optimization \cite{BrockHommes1998,PastorVeronesi2009}. If such emergent mechanisms partially wash out individual-level preference structures, then interpreting $Q$ simply as a preference-weighted transformation of $P$ is no longer the only, nor necessarily the most natural, explanation.

At the same time, the standard interpretation implicitly relies on an epistemic premise that is rarely made explicit: once a distinction between $P$ and $Q$ is introduced, $P$ is often taken, without further justification, as the “true probability,” while $Q$ is treated as a technically convenient but non-true probability for pricing purposes. Yet asset pricing theory itself does not endow $P$ with such ontological priority. In a finite state space, what is called $P$ is nothing more than a prior assignment of equal weights to primitive states; its special status lies first in its maximal symmetry, not in any inherent proximity to the real world. In other words, the importance of $P$ does not stem from being logically “more true,” but from providing the least-committed and most symmetric probabilistic reference frame in the absence of further information. Once viewed in this way, the relationship between $P$ and $Q$ is no longer simply one between a “true probability” and a “distorted probability,” but rather a distinction between two probabilistic reference frames associated with different informational structures. The notion of a reference frame here is not metaphorical: it refers to the way the market organizes the terminal state space and assigns relative weights across states as a system of probabilistic coordinates. A simple example illustrates this point. Suppose an event has only two terminal states, “success” and “failure.” In the absence of additional information, assigning probability $1/2$ to each state is the natural symmetric choice. However, as signals such as approval progress, order flow, or default indicators are gradually revealed, maintaining the $1/2$–$1/2$ description becomes untenable, and the market must adopt a non-uniform probability structure to reflect the relative likelihood of the two states. What changes in this process is not necessarily preferences, but the probabilistic reference frame under which the market operates. In this sense, traders do not face a fully known and numerically fixed objective distribution; rather, they operate within a state-space structure that must be continually inferred and updated as information is revealed. Even if one grants that probabilities may in some sense exist objectively, the market cannot know their exact values ex ante. What traders and markets can do is to update their assessment of terminal states based on observable revelation processes. As a result, phenomena that are often directly attributed to “preference distortions” may, at a more fundamental level, be reinterpreted as updates and reconfigurations of the probability structure itself. If this organization of relative state weights is viewed as a geometry, then what market pricing relies on is not merely a set of probability numbers, but a probability geometry that evolves with information revelation.

It is precisely at this point that the present paper seeks to revive the notion of “relativity.” In physics, general relativity is called a theory of relativity not simply because it distinguishes reference frames, but because it uses geometry to express the covariance of physical laws across a broader class of frames, thereby removing the a priori privilege of particular frames. The pair $(P,Q)$ in asset pricing can likewise be interpreted as two different probabilistic reference frames. If probability is understood as a tool through which the market quantifies uncertainty, performs calculations, and updates beliefs, then there is no compelling reason to grant $P$ a higher ontological status than $Q$ at the outset. A more natural view is that different probability geometries correspond to different informational structures. When terminal structure is completely indistinguishable, the least-committed reference frame—by Jaynes’s maximum entropy principle \cite{Jaynes1957,Jaynes1982}—is the flat geometry with equal weights on primitive states. When terminal structure becomes partially revealed, the same maximum entropy logic leads to a non-uniform posterior geometry as the appropriate reference frame. Under this view, the difference between $P$ and $Q$ is not primarily a difference in “truth,” but a difference in consistency with the current revelation structure.

Building on this perspective, the paper raises a central question: \emph{if the risk-neutral probability $Q$ is reinterpreted as a posterior probability geometry formed by the market under the current revelation structure, rather than as a preference-weighted instrumental measure, what new organizational structure would emerge for asset pricing theory?} The answer proposed here is that, once this reinterpretation is adopted, the Fundamental Theorem of Asset Pricing, conditional expectation projections, event-driven price dynamics, volatility patterns, and even the Recovery problem can all be understood within a unified geometric language. More specifically, we advance the following proposition: \emph{terminal structural information determines how probability geometry is curved, and probability geometry determines how information is manifested in prices.} This statement closely parallels the canonical formulation in general relativity—\emph{matter tells spacetime how to curve, and spacetime tells matter how to move} (see \cite{MisnerThorneWheeler1973} and related expositions). It is in this precise sense that we refer to the framework as \emph{financial relativity}: rather than mechanically importing concepts from physics, it seeks to establish a coherent theoretical structure linking the equivalence of probabilistic reference frames, the formation of geometric structure, and the covariant expression of price dynamics.

This proposition intersects with several strands of the existing literature, while also differing from them in essential ways. First, relative to the foundational no-arbitrage literature 
, the present paper does not alter the mathematical form of risk-neutral pricing, but reinterprets its conceptual status: $Q$ is no longer merely a measure whose existence suffices for pricing, but is elevated to the role of the probability geometry actually employed by the market. Second, in contrast to preference-based, ambiguity-aversion, or robust control approaches \cite{EpsteinSchneider2010,HansenSargent2001}, we do not derive risk prices from utility, but instead locate the origin of geometry in terminal structure and revelation mechanisms. Third, the framework is closely related to the literature on information-based asset pricing and learning, which emphasizes that prices are formed under incomplete information \cite{GrossmanStiglitz1980, Kyle1985,Back1992, DuffieLando2001}, and \cite{BrodyHughstonMacrina2008, PastorVeronesi2009, Veronesi1999, BrennanXia2001}. However, unlike these contributions, the objective here is not to analyze price behavior under a given information structure, but to propose a unifying organizational principle: discrete no-arbitrage pricing, continuous filtering, risk-neutral propagation, and event-driven volatility are all recast within a three-layer framework of “information–geometry–motion.” Recent work in information-theoretic and entropy-based approaches has also begun to emphasize the recovery of pricing kernels and probability measures from prices \cite{GhoshJulliardTaylor2025}, which resonates with the direction pursued here. To the best of our knowledge, however, no existing work systematically reformulates the Fundamental Theorem of Asset Pricing as a relativity-style theory of probability geometry in the manner proposed in this paper.

The contributions of this paper follow a unified logical structure. First, we conceptually reinterpret the risk-neutral probability. Rather than viewing $Q$ as a transformation of the physical probability $P$ under representative preferences, we interpret it as a posterior probability geometry formed by the market under given informational constraints. This reinterpretation does not deny the importance of preferences at the individual level; rather, it shifts the core of market-level pricing from a “preference–probability” language to an “information–geometry” language, thereby alleviating the tension between no-arbitrage pricing and its economic interpretation. Second, we propose and formalize the basic principles of financial relativity. We show that the statement “matter tells spacetime how to curve, and spacetime tells matter how to move” has a precise analogue in asset pricing. In discrete state spaces, this principle corresponds to a field equation and geometric potential induced by terminal structural sources; in continuous time, it corresponds to a stochastic evolution equation for posterior densities under ongoing informational constraints. In this way, discrete no-arbitrage pricing, conditional expectation projections, continuous filtering, and inertial price propagation are unified within a single theoretical framework. At the same time, we develop a structure that connects discrete and continuous settings. In finite-state models, price is interpreted as the orthogonal projection of the terminal payoff vector onto information subspaces; in continuous-time models, the posterior density $q_t$ becomes a dynamic probability geometry, and price appears as a coordinate of projection within that geometry. Based on this structure, we derive the continuous financial field equation and the associated price propagation equation, showing that volatility should be endogenously linked to posterior uncertainty (i.e., posterior variance), rather than driven by exogenous “risk factors.” This provides a unified structural explanation for event-driven assets, the reconfiguration of implied distributions, and volatility patterns around major announcements. Finally, we derive a set of testable empirical implications. If market prices indeed reflect dynamic posterior geometry, several structural regularities should emerge: (i) price volatility should co-move with posterior uncertainty; (ii) posterior probabilities should adjust most rapidly in regions of maximal uncertainty and converge endogenously as certainty is approached; (iii) major informational constraints should induce systematic geometric reconfigurations in both prices and implied distributions; and (iv) part of what is commonly interpreted as risk premia can instead be understood as apparent drift arising from observing geometric motion under a coarse probability reference frame, rather than being entirely attributed to preference-based compensation.

The remainder of the paper is organized as follows. Section 2 develops the structural analogy between financial relativity and physical relativity. Section 3 presents the theoretical framework, including the core principles, discrete field equation prototypes, and the projection structure linking geometry and prices. Section 4 provides a discrete-state example illustrating the full chain from structural information to probability geometry and projected pricing. Section 5 develops the continuous-time financial field equation and shows how posterior geometry evolves under continuous information arrival, together with its empirical implications. Section 6 concludes.

\section{The Structural Analogy of Financial Relativity: From Physical Intuition to Pricing Geometry}\label{sec:analogy}

In finance, the Fundamental Theorem of Asset Pricing is usually understood as a mathematical statement about the absence of arbitrage: if the market is arbitrage-free, then there exists an equivalent measure $Q$ under which the discounted price process is a martingale. This statement is technically complete, yet conceptually it leaves a fundamental question unresolved: what exactly does this measure mean? If $Q$ is understood merely as a computational measure jointly induced by preferences and no-arbitrage conditions, then its form depends on particular preference specifications, making the interpretation of price relations path-dependent. Such an interpretation, however, does not adequately explain why price relations continue to display stable structural forms even when the probability measure, preference structure, or information partition changes. We therefore argue that the deeper structure of asset pricing is not exhausted by an “existence theorem”; rather, it contains a geometric mechanism that closely resonates with relativity: probability structure is not merely a tool for pricing formulas, but the geometry of the market itself; and price is not a simple average of future cash flows, but a projection of the terminal payoff vector onto the currently distinguishable information structure.

This claim is not merely metaphorical, but is supported by two facts. First, an asset price does not correspond directly to some “actual current location”; rather, it is the current representation of the terminal payoff $X$. Given a terminal state space $\Omega=\{\omega_1,\dots,w_n\}$, the ontological object of the asset is the terminal payoff vector $X=(X(\omega_1),\dots,X(w_n))$. At any time $t$, the market cannot fully distinguish all terminal states; it can only identify {\em certain equivalence classes of states} according to the current information partition $\mathcal F_t$. Accordingly, the current discounted price is not the terminal payoff itself, but its conditional expectation representation in the current information subspace. Second, under the probability geometry $Q$, the discounted price follows the simplest propagation law:
\begin{equation}
	\widetilde S_t=E^Q[X\mid \mathcal F_t].
\end{equation}
This means that, under the appropriate reference frame, no additional “risk-compensation force” is needed to drive prices; discounted prices simply propagate naturally along the directions permitted by the current information structure. At this point, a deep formal resonance between asset pricing and relativity already emerges: in physics, a free particle is not pushed by a force, but moves along a geodesic in spacetime geometry; in finance, the discounted price is likewise not dragged by some observable compensation force, but evolves through projection and propagation within probability geometry.

Accordingly, what we call “financial relativity” is not a mechanical transplantation of physical terminology into finance. It is a stronger claim: once terminal payoffs, information structure, probability measures, and price propagation are placed within a unified geometric framework, one finds that asset pricing theory already contains a relativity-like structural hierarchy. At a minimum, it includes four mutually corresponding ideas: first, there exists a unique inertial background that may be viewed as “flat spacetime”; second, market probability geometry is curved by terminal structural information; third, discounted prices exhibit local inertial propagation under the appropriate reference frame; and fourth, differences across reference frames cause the same price propagation to appear either as “motion without premia” or as “motion with premia,” just as different physical reference frames may represent the same worldline as either unaccelerated or accelerated. The following discussion develops these points in turn.

\subsection{Spacetime, Objects, and Projection: The Geometric Skeleton of Asset Pricing}\label{subsec:spacetime}

The basic objects of physical relativity are not isolated coordinate points, but events, worldlines, and metrics on a spacetime manifold. In financial markets, one must first dispel a possible misunderstanding: price itself is not the “object” of finance. Price is merely the market’s current compressed representation of future payoffs; it is not the object itself. More precisely, the “object” in finance is the terminal payoff vector $X$, which defines the value of an asset across all primitive terminal states. The price sequence $\{S_t\}$ is then the trajectory by which this terminal payoff is represented over time and under different levels of information resolution.

Under this definition, financial spacetime may be understood as a triadic structure consisting of time, the terminal state space, and information filtration. Formally, we write
\[
(t,\Omega,\mathcal F_t),\qquad \mathcal F_0\subset \mathcal F_1\subset \cdots \subset \mathcal F_T.
\]
Here, $\Omega$ specifies the possible terminal states, while $\mathcal F_t$ describes the blocks of states that the market can distinguish at time $t$. Information filtration is not an external description imposed on probability; it is part of financial spacetime itself: it determines what the market is able to “see” at a given moment. If the local structure of physical spacetime determines the natural trajectory of an object, then the information resolution of financial spacetime determines the form in which a terminal payoff can be represented at the present. From this point, the geometric meaning of price becomes clear. Let $L^2(\Omega,\mathcal F,Q)$ denote the Hilbert space weighted by $Q$, with inner product
\begin{equation}
	\langle Y,Z\rangle_Q :=E^Q[YZ].
\end{equation}
At each time, the representable subspace is
$L^2(\mathcal F_t)=\{Y\in L^2(\Omega,\mathcal F,Q): Y\text{ is }\mathcal F_t\text{-measurable}\}.
$
Then the current discounted price satisfies
\begin{equation}\label{eq:EQX}
	\widetilde S_t=E^Q[X\mid \mathcal F_t],
\end{equation}
and we show that the conditional expectation in (\ref{eq:EQX}) can be expressed as the orthogonal projection of the terminal payoff $X$ onto the subspace $L^2(\mathcal F_t)$. A rigorous proof is given in Proposition \ref{prop:projection-general} of the Appendix. This interpretation is crucial, because it allows the essence of price to be expressed in an extremely concise way: \emph{price is projection}. In this language, pricing is no longer described as “some weighted average,” but as the best geometric representation of the terminal payoff under the current information structure.

An additional advantage of the projection perspective is that it naturally characterizes the financial analogue of a “worldline.” The terminal payoff $X$ is fixed, but as information becomes progressively finer, the projection subspaces expand, so that
\[
\widetilde S_0=P_{\mathcal F_0}X,
\qquad
\widetilde S_1=P_{\mathcal F_1}X,
\qquad \dots,\qquad
\widetilde S_T=P_{\mathcal F_T}X=X.
\]
From this perspective, a price trajectory is not the movement of some object “itself,” but the changing sequence of projections of the same terminal payoff vector onto a nested family of subspaces. Moreover, the result of projection has the same dimension as the terminal payoff $X$ (see Proposition \ref{fig:projection} in the Appendix for further discussion). This means that the “motion” of the object in finance is not merely a change in coordinate values, but a change in representational structure itself. Figure \ref{fig:projection} in Appendix \ref{app:projection} provides an intuitive three-dimensional example illustrating how price evolution can be represented through changing projections.

\subsection{Inertial Frames, the Classification of Reference Frames, and the Financial Equivalence Principle}\label{subsec:frames}

Once the price process is understood as a sequence of projections, a natural question arises: under what probability structure does this projection-based propagation take its simplest form? Logically, this is the financial counterpart of the problem of inertial frames in physical relativity.

In physics, an inertial frame is not a coordinate choice given a priori; rather, it is defined through the form of motion: in an inertial frame, a free particle exhibits no additional acceleration, and its dynamics take the simplest possible form. By analogy, in finance one may define “natural propagation” as the situation in which the evolution of the discounted price contains no systematic drift. Formally, under a {\em chosen} probability measure $\pi$, this property can be written as
\begin{equation}
E^{\pi}[\widetilde R_t \mid \mathcal F_{t-1}] = 1,
\end{equation}
where $\widetilde R_t=\widetilde S_t/\widetilde S_{t-1}$ denotes the discounted return.

Accordingly, one may define a \emph{financial inertial frame} as a reference frame under whose associated probability geometry the discounted price process satisfies this simplest propagation law. As follows from (\ref{eq:EQX}), this property holds precisely under the equivalent martingale measure $\pi=Q$. Hence the probability geometry represented by $Q$ is not an arbitrary choice of measure, but the natural geometry under which price propagation takes inertial form. To proceed further, we first distinguish between two different levels of probability structure.

The first is the \emph{flat background} $P$: the prior geometry obtained by assigning equal probability weights to terminal states in the complete absence of structural information,
\begin{equation}
	P(w)=\frac1{|\Omega|},\qquad w\in\Omega.
\end{equation}

This geometry has the highest degree of symmetry and may be viewed as the unique “flat structure” of the financial world. It does not imply the existence of a true probability; rather, it represents the least committed description of the state space when no further constraints are imposed.

The second is the \emph{informational background} $Q$: the probability measure determined by the maximum entropy principle under terminal structural information constraints. More specifically, terminal states are often constrained by fundamentals, institutions, technologies, or external conditions. These constraints are not generated over time; instead, they impose structural restrictions on the terminal distribution. Under such constraints, the least biased probability assignment is the maximum entropy distribution satisfying them, thereby yielding $Q$. Hence
\[
Q = \arg\max H(Q)\quad \text{s.t. structural information constraints}.
\]

It is important to emphasize that the “information” referred to here is not the market information flow $\mathcal F_t$. The latter does not create new information, but merely reveals pre-existing terminal structure over time. In the sense of information conservation (to be discussed later), the system’s total information content is already fixed at the terminal date; what unfolds over time is only the revelation of that structure, not its generation. Therefore, $Q$ is a probability geometry determined once and for all by terminal constraints, rather than an object dynamically generated by information flow.

On this basis, we further distinguish the conditional geometry $Q_t$:
\[
Q_t(\cdot) = Q(\cdot \mid \mathcal F_t),
\]
which describes the conditional probability structure under the current filtration $\mathcal F_t$. Unlike $Q$, $Q_t$ varies over time, but this variation merely reflects the unfolding of information and does not alter the total amount of terminal information.

It follows that the “flatness” of finance depends on whether structural constraints are present: when no constraints exist, $Q=P$, and the geometry is flat; once constraints are present, $Q$ deviates from $P$, thereby creating a curved probability geometry. By contrast, “inertiality” depends on whether the observer describes price propagation under that geometry.

Based on these distinctions, one may classify financial reference frames into four types, whose structure is jointly determined by the informational background and the mode of observation.

The first is the \emph{inertial frame without gravity}. In this case, no structural constraints exist, so the market geometry is the flat background $Q=P$, and the observer also describes price propagation under this geometry. Therefore, discounted returns satisfy
\[
E[\widetilde R_t\mid \mathcal F_{t-1}]=E^P[\widetilde R_t\mid \mathcal F_{t-1}]=E^{Q}[\widetilde R_t\mid \mathcal F_{t-1}]=1,
\]
which corresponds to inertial motion in flat spacetime.

The second is the \emph{accelerated frame without gravity}. Here the geometry remains flat, that is, $Q=P$, but the observer uses an inconsistent probability geometry $Q'$, and therefore perceives an apparent drift. This corresponds to an accelerated reference frame in flat spacetime, in which acceleration arises from the choice of coordinates rather than from any real force.

The third is the \emph{static frame in a gravitational field}. In this case, terminal structural constraints imply that the market geometry is some $Q\neq P$, while the observer still uses the measure $P$ to describe price propagation. Then discounted returns appear as
\[
E[\widetilde R_t\mid \mathcal F_{t-1}] =E^{P}[\widetilde R_t\mid \mathcal F_{t-1}]= 1+\phi_t,
\]
where $\phi_t$ may be interpreted as the apparent drift generated by observing curved geometry from a non-natural reference frame.

The fourth is the \emph{freely falling frame in a gravitational field}. In this case, the market geometry is $Q$, and the observer also describes price propagation under $Q$, so that
\[
E^{Q}[\widetilde R_t\mid \mathcal F_{t-1}] = 1.
\]

This corresponds exactly to the freely falling frame in general relativity: the gravitational field has not disappeared, and the geometry remains curved, but the local first-order “gravitational acceleration” has been absorbed into the choice of reference frame, so that free motion again takes its simplest form. The financial version of the equivalence principle holds precisely in this sense: \emph{risk premia need not be treated as a fundamental force; they may instead be understood as apparent drift arising from observing curved probability geometry in a non-natural reference frame.}

One point deserves special emphasis. In physics, the equivalence principle does not imply that all gravitational effects can be eliminated; what can be locally removed is the first-order acceleration term, whereas tidal forces and curvature cannot. The same is true in finance: switching to $Q$ does not make uncertainty disappear. It only removes the systematic drift term from discounted prices; differences across state directions in volatility, covariance structure, and the multiplicity of $Q$ in incomplete markets remain as “second-order effects” of financial geometry. Financial relativity therefore does not deny risk or uncertainty. Rather, it claims that at least part of what is called a risk premium should be understood as a problem of geometry and reference frames, rather than being attributed a priori to the direct effect of preferences.

\begin{table}[htbp]
\centering
\caption{Structural correspondence between four types of physical and financial reference frames}\label{tab:frames}
\renewcommand{\arraystretch}{1.35}
\begin{tabular}{p{3.2cm}p{4.2cm}p{7.2cm}}
\toprule
Physical reference frame & Financial reference frame & Observed result and return representation \\
\midrule
Inertial frame without gravity & Market geometry is the flat background, $Q=P$ & The market adopts the flat geometry itself; discounted returns satisfy $E^{Q}[\widetilde R_t\mid\mathcal F_{t-1}]=1$, with no apparent premium. \\
Accelerated frame without gravity & Market geometry remains $Q=P$, but is evaluated under an incorrect market geometry & The geometry itself is not curved, but the inappropriate choice of reference frame produces an apparent deviation in discounted returns; the premium arises from mismeasurement rather than curvature. \\
Static frame in a gravitational field & Market geometry is $Q$, while the observer evaluates under $P$ & Discounted returns appear as $E^{P}[\widetilde R_t\mid\mathcal F_{t-1}]=1+\phi_t$; the risk premium is an apparent term generated by observing curved geometry under the wrong probability. \\
Freely falling frame in a gravitational field & Market geometry is $Q$, and the observer also evaluates under $Q$ & Discounted returns return to the form $E^{Q}[\widetilde R_t\mid\mathcal F_{t-1}]=1$; the first-order drift disappears, though uncertainty and curvature remain. \\
\bottomrule
\end{tabular}
\end{table}

\subsection{Relativistic Invariance, Metric, the Analogue of Light-Speed Invariance, and Curvature}\label{subsec:metric}

Relativity is not called relativity merely because it distinguishes reference frames, but because certain basic invariants are preserved under transformations between them. In physics, the laws of nature retain their form in inertial frames; general relativity extends this principle to a broader covariant setting. In finance, the true counterpart of such invariance is not the numerical value of price itself, but the geometric form of the pricing relation. Whether one starts from the conditional expectation relation of discounted prices or from the unit conditional expectation of discounted returns, the underlying claim is the same: given a probability geometry and an information structure, price is always the projection of the terminal payoff onto the currently representable subspace.

Accordingly, the relativistic invariance of finance can be expressed by the following principle: \emph{under any given probability geometry, the asset-pricing relation preserves its projection form.} More precisely, no matter which $Q$ is determined by informational constraints, the pricing equation is always
$
\widetilde S_t = E^Q[X\mid \mathcal F_t],
$
or equivalently,
$
E^Q[\widetilde R_t\mid \mathcal F_{t-1}] = 1,
$
. As in physical relativity, this invariance does not require numerical equality across reference frames; rather, it requires that different representations preserve the same structural relation.

Within this structure, the role of the metric is played by probability geometry. For any two terminal payoff vectors $Y,Z$, define the inner product
$
\langle Y,Z\rangle_Q = E^Q[YZ].
$
This inner product determines lengths, angles, and orthogonality in the price space. If all primitive states receive equal weight, the geometry is the flat background; if some states are systematically amplified under $Q$, then the geometric scale along those directions is correspondingly distorted. In this sense, “gravity” in finance should not be understood as an extra force acting on prices, but as a non-uniform reweighting of state directions. Put differently, \emph{the curvature of geometry does not first manifest itself as a change in price levels, but rather as a change in which state directions receive greater weight in pricing.}

In this same spirit, the “invariance of the speed of light” in physics also admits a rather precise counterpart in finance. In physics, the special role of the speed of light $c$ lies in the fact that it is invariant across all inertial frames and thereby determines the local structure of spacetime. In finance, the corresponding invariant is not a particular level of return, but the unit conditional expectation of discounted returns:
\[
E^Q[\widetilde R_t\mid \mathcal F_{t-1}] =\mathbf 1.
\]
Here the unit vector $\mathbf 1$ serves as the local benchmark of natural propagation. Its meaning is that once one switches to the appropriate geometric reference frame, the predictable component of discounted returns no longer depends on the asset class, but collapses uniformly to “1.” In this sense, $\mathbf 1$ is the local propagation constant of the financial world. It is not a price, a return, or an interest rate in the conventional sense, but a local invariant of pricing geometry.

Of course, what truly makes spacetime curved in relativity is not first-order acceleration, but curvature. Finance also has an analogous structure. Even when the first-order drift term disappears under $Q$, differences across state directions in volatility, covariance, incompleteness, and the multiplicity of probability geometries remain. These are “second-order effects” that cannot be eliminated by a simple change of reference frame. If the risk premium observed under $P$ corresponds to gravitational acceleration, then the non-uniformity of state weights and the path dependence induced by informational constraints are closer to the notion of curvature in finance. Financial relativity therefore does not merely replace old terms with new ones; it reorganizes the hierarchy within asset pricing theory: first-order apparent drift belongs to the problem of reference frames, whereas second-order non-uniformity and multiple geometries belong to the problem of genuine structure.

In summary, a systematic structural correspondence can be established between finance and relativity: terminal payoffs correspond to objects; time, state space, and information filtration constitute financial spacetime; conditional expectation projection corresponds to the local representation of price in the current geometry; the flat prior $P$ is the unique inertial background; probability geometry $Q$ corresponds to curved spacetime; the unit conditional expectation of discounted returns corresponds to a local propagation invariant; and changes of reference frame determine whether the same price propagation appears as “free motion” or as “accelerated motion with risk premia.” This structure is summarized in Table \ref{tab:relativity-map}.

\begin{table}[htbp]
\centering
\caption{Core conceptual correspondence between physical relativity and financial relativity}\label{tab:relativity-map}
\renewcommand{\arraystretch}{1.3}
\begin{tabular}{p{3.4cm}p{4.6cm}p{6.2cm}}
\toprule
Concept in physical relativity & Financial counterpart & Explanation \\
\midrule
Spacetime & Time, terminal state space, and information filtration $ (t,\Omega,\mathcal F_t)$ & Information filtration determines the state blocks distinguishable by the market at each moment. \\
Object & Terminal payoff vector $X$ & The ontological asset is not price, but the payoff structure over terminal states. \\
Worldline & Discounted price trajectory $\widetilde S_t$ & The continuous projection of the same terminal payoff onto information subspaces over time. \\
Inertial background & Equal-probability prior $P$ & The unique flat background of the financial world. \\
Curved spacetime & Probability geometry $Q$ & The systematic reweighting of state directions induced by terminal structural information. \\
Metric & Inner product $\langle Y,Z\rangle_Q=E^Q[YZ]$ & Determines lengths, orthogonality, and projection in price space. \\
Free motion / geodesic & Martingale propagation of discounted prices under $Q$ & Natural propagation without systematic drift. \\
Equivalence principle & Elimination of first-order apparent premia under $Q$ & Risk premia can be interpreted as apparent terms arising in non-natural reference frames. \\
Light-speed invariance & $E^Q[\widetilde R_t\mid\mathcal F_{t-1}]=1$ & The unit vector $\mathbf 1$ defines local natural propagation. \\
Gravitational acceleration & Risk premium $\phi_t$ observed under $P$ & First-order apparent drift generated by observing curved geometry under the wrong measure. \\
Curvature & Non-uniform state weights, volatility–covariance structure, and multiple $Q$ & Second-order structure that cannot be eliminated by a simple change of reference frame. \\
\bottomrule
\end{tabular}
\end{table}

\section{Financial Relativity: Theoretical Principles, Field Equations, and Philosophical Implications}

The discussion in Section 2 shows that the relationship between asset pricing theory and physical relativity is not merely metaphorical, but rests on a deeper structural isomorphism: in both theories, \emph{geometry} is not an auxiliary mathematical device, but the fundamental carrier of the law of motion. By replacing the traditional “force–background” dichotomy with geometric structure, physical relativity reinterprets gravity from an external force into the curvature of spacetime itself. As argued in the previous section, asset pricing theory can likewise be reformulated in geometric terms: probability measures are no longer merely tools for computing expectations, but constitute the probability geometry on which price propagation depends; and the price process is no longer interpreted as the outcome of exogenous driving forces, but as a natural mode of motion under that geometry. In this way, the relation among probability measures, price projection, and dynamic propagation can be unified within a single geometric framework.

The central proposition of this paper can be stated as follows: \emph{terminal structural information determines how probability geometry is curved, while probability geometry determines how information is manifested in prices.} This proposition can be understood at three interrelated levels. First, the market does not evolve in a pre-given and uniform probability space; rather, the structural information embedded in terminal states changes the relative weights of different state directions through constraints, thereby potentially generating a non-uniform probability geometry. Second, once probability geometry is given, price evolution no longer depends on exogenous “risk-premium forces”; its natural form is instead expressed as conditional expectation projection and local inertial propagation under that geometry. Third, prices are not merely passive reflections of terminal information, but the main mechanism through which information is manifested in time: through price paths, the market gradually unfolds the structural differences embedded in terminal states into the observable domain. Hence, within this framework, terminal structural information, probability geometry, and price dynamics are no longer three separate objects, but three aspects of one unified theoretical structure: information determines geometry, geometry governs prices, and the evolution of prices realizes the manifestation of information in the market.

We regard the foregoing formulation as important because it changes the causal order of asset pricing theory. In the classical formulation, probability, preferences, and prices tend to be placed on the same plane: given a physical probability $P$ and preferences or cognitive biases, prices arise from some discounted average, while the risk-neutral measure $Q$ is merely a convenient rewriting of that process. Financial relativity, by contrast, argues that at a more fundamental level one must first answer the question: under what geometry does the market understand the terminal state space? Once this is clarified, ideas that were previously scattered across no-arbitrage pricing, stochastic discount factors, risk-neutral measures, filtering theory, and information economics can be brought into a single theoretical language.

\subsection{Basic Principles: From Terminal Structure to Probability Geometry to Price Motion}

To avoid conceptual confusion, we begin by distinguishing between two objects that are both commonly referred to as “information.” The first is \textbf{terminal structural information}, which characterizes structural constraints in the terminal state space, assigns unequal weights to different state directions, and thereby determines the curvature of probability geometry. The second is the \textbf{revelation process}, which describes how these structural differences are gradually manifested through price paths over time. The former constitutes the source of geometry, while the latter represents the unfolding of geometry in time.

In a closed system, the total amount of terminal information remains unchanged; the dynamics of prices and probabilities merely redistribute this pre-existing information rather than create new information. Formally, the amount of information manifested through prices at each period plus the remaining uncertainty is equal to the initial total uncertainty. This law of information conservation will be established in Appendix \ref{app:info-conservation} in the form of an entropy decomposition (see Theorem \ref{thm:info-conservation}). Accordingly, the terms “revelation” or “manifestation” used in this paper always refer to the unfolding of pre-existing structural information in the market, rather than to its generation.

Under the above distinction, the first principle of financial relativity can be stated as follows: \textbf{terminal structural information determines how probability geometry is curved.} Let the terminal state space be
$
\Omega=\{\omega_1,\omega_2,\dots,w_n\},
$
and let the symmetric prior in the absence of structural and informational constraints be
$
P(w_i)={1}/{n},\qquad i=1,\dots,n.
$
$P$ defines a uniform probability measure and thus corresponds to the unique flat probability structure. Since it does not incorporate informational constraints, it will be interpreted as a {\em “coarse-grained probability”}. Once terminal constraints are introduced, the probability measure deviates from $P$, thereby generating a curved geometry. In financial relativity, the market probability measure $Q$ is interpreted as a probability geometry determined by terminal structural information through the maximum entropy principle, rather than being explained by preferences or cognitive biases. The inner product defined by $Q$,
$
\langle X,Y\rangle_Q := E^Q[XY],
$
endows the terminal payoff space with a Hilbert structure, so that lengths, angles, and projections are all characterized by probability geometry.

The second principle states that \textbf{probability geometry determines how prices move.} Under this geometry, the discounted price is defined by
$
\widetilde S_t = E^Q[X\mid \mathcal F_t],
$
which is equivalent to the orthogonal projection of the terminal payoff vector onto the information subspace:
\begin{equation}
	\widetilde S_t=P_{\mathcal F_t}X.
\end{equation}
Hence, price is not the result of exogenous driving forces, but the projection representation under geometric structure; its dynamics are expressed as the continuous updating of projections on increasingly refined information subspaces. On this basis, the natural propagation of prices can be written as
$
E^Q[\widetilde R_t\mid \mathcal F_{t-1}] =1,
$
where $\widetilde R_t$ denotes the discounted return; in continuous time, it becomes
\begin{equation}
	\mathrm{d}\widetilde S_t = \Sigma_t\,\mathrm{d}W_t^Q,
\end{equation}
where $W_t^Q$ is the innovation process under geometry $Q$. This equation has the same form as the “risk-neutral dynamics” of classical finance, but its theoretical interpretation is entirely different: in the classical framework, it is merely a measure-theoretic device introduced for pricing; in financial relativity, it is the inertial propagation equation of the asset in its natural geometry. Accordingly, the risk premium $\phi_t$ is no longer understood as an intrinsic property of the asset itself, but as an apparent acceleration that arises when inertial motion is observed from a non-natural reference frame. If one uses a coarse-grained probability $P$ to observe the same asset, one obtains
$
\mu_t = r_f + \phi_t,
$
so that returns appear to be driven by an additional force and to contain a premium; but from the viewpoint of geometry $Q$, this apparent force disappears, leaving only unpredictable local noise. This corresponds to the financial version of the equivalence principle: \emph{the “risk premium” embedded in drift can be eliminated by choosing the correct geometric reference frame.}

\subsection{Financial Field Equations: The Origin of Geometry and the Selection of Multiple Geometries}

In physical relativity, geometry is linked to its source by field equations. Correspondingly, financial relativity must not only describe price motion under a given probability geometry $Q$, but also explain how this geometry is determined by terminal structure. This leads to the prototype of the financial field equation:
\begin{equation}
	\mathcal{G}(Q)=\mathcal{I},
\end{equation}
where $\mathcal{G}(Q)$ denotes the geometric operator generated by the market probability geometry $Q$, and $\mathcal{I}$ denotes the source term constituted by terminal structural information. In structural terms, this expression corresponds to the Einstein field equation
\begin{equation}
	G_{\mu\nu}=8\pi T_{\mu\nu}, \qquad \text{set }c=1
\end{equation}
Its central meaning is that probability geometry is not given a priori, but is endogenously generated by terminal structural constraints. The basic logic of financial relativity may therefore be stated as follows: terminal informational structure shapes probability geometry, and probability geometry organizes price motion (that is, the manifestation of terminal information).

To obtain an operational representation, one may introduce a logarithmic geometric potential function $\phi$, and write the probability measure as
\begin{equation}
	Q(w)=\frac{P(w)e^{\phi(w)}}{\sum_{u\in\Omega}P(u)e^{\phi(u)}},
\end{equation}
where $P$ is the flat prior. The function $\phi$ describes the deviation of probability geometry from the flat structure. In a discrete state space, a natural prototype equation is
\begin{equation}
	L\phi = \kappa\rho,
\end{equation}
where $L$ is the discrete Laplacian, $\rho$ denotes the source term of terminal structural information, and $\kappa$ is the coupling strength. This equation shows that the role of structural sources is not to enter the pricing equation directly as an “external force,” but to distort probability geometry by changing the geometric potential $\phi$, thereby indirectly determining the mode of price propagation.

The significance of this representation lies in elevating the problem of generating probability geometry to an extensible operator framework. First, the geometric operator $\mathcal{G}$ need not be restricted to linear or static forms; in continuous time it may be generalized to a stochastic partial differential operator on log densities, or to a nonlinear operator coupled with the evolution of posterior distributions. Second, the source term $\mathcal{I}$ should not be understood narrowly as an exogenous signal, but more generally as any structural constraint capable of breaking equal-probability symmetry, including institutional constraints, accounting constraints, technological paths, credit constraints, and behavioral structure. Financial field equations are therefore not a closed model, but a theoretical prototype capable of accommodating different structural sources.

Unlike physical spacetime, probability geometry is generally not unique in incomplete markets. The no-arbitrage condition guarantees only that the set of equivalent martingale measures
\[
\mathcal{Q}=\{Q:Q\text{s.t. No. Arbitrage}\}
\]
is non-empty, but it is insufficient to determine a unique element within it. Thus, under financial relativity, the flat reference frame $P$ is unique, while the curved geometry $Q$ may constitute a set. This difference implies that the problem of generating geometry includes not only “how geometry is determined by structure,” but also “how one selects from the feasible set.”

On this basis, the maximum entropy principle can be introduced as a normalization device for geometric selection: among all probability geometries satisfying the constraints, choose
\[
Q^*=\arg\max_{Q\in\mathcal{Q}} H(Q),
\]
where $H(Q)$ denotes Shannon entropy. This principle expresses a criterion of “minimal structural bias,” namely that under existing structural constraints, no additional asymmetry should be introduced. Its function is to provide a closure mechanism for multiple geometries without relying on unobservable preference specifications.

Methodologically, this selection principle has a clear meaning: when geometry is not unique, the problem is no longer one of characterizing some latent structure of risk aversion, but of identifying the most neutral probability geometry under given constraints. In this way, the source of the probability measure $Q$ shifts from a preference-based explanation to one based on structure and information constraints. Financial relativity thus removes the problem of generating geometry from the realm of unobservable utility systems and relocates it within a structural framework that can be modeled and tested.

\subsection{Price as Geometric Projection: The Observational Structure of Information Manifestation}

Under the two basic principles and the field equation $\mathcal G(Q)=\mathcal I$, price is no longer an independently specified dynamic variable, but becomes the directly observable image of probability geometry under the information filtration $\mathcal F_t$. Accordingly, the central issue of this subsection is not to reinterpret the pricing formula itself, but to characterize \emph{how the price mapping determined by probability geometry organizes the observable manifestation of terminal structural information in the market.} Given a geometry $Q$, the discounted price
$
\widetilde S_t = E^Q[X\mid \mathcal F_t]
= P_{\mathcal F_t}X.
$
compresses the terminal payoff function $X$ from the full space into the current information subspace by means of projection. Probability geometry $Q$ determines the projection structure, and the projection structure determines which terminal differences can be distinguished by the current price.

\begin{proposition}[Price as the mechanism through which terminal structural information is manifested]\label{prop:price-reveals-information}
Let the terminal state space be $\Omega$, the probability geometry be $Q$, and the terminal payoff be a function $X:\Omega\to\mathbb R$. Define the discounted price at time $t$ by
\[
\widetilde S_t(w)=E^Q[X\mid \mathcal F_t](w).
\]
For any observable price value $s$, define the corresponding set of terminal states by
\[
\Omega_t(s):=\{\,w\in\Omega:\widetilde S_t(w)=s\,\}.
\]
Then observing the price $\widetilde S_t=s$ is equivalent to knowing that the true state $w$ belongs to the set $\Omega_t(s)$, and the corresponding conditional probability is
\[
Q(w\mid \widetilde S_t=s)
=
\frac{Q(w)}{\sum_{u\in\Omega_t(s)}Q(u)},
\qquad w\in\Omega_t(s),
\]
while for $w\notin\Omega_t(s)$, one has $Q(w\mid \widetilde S_t=s)=0$. The corresponding residual uncertainty is measured by
\[
H_t(s)
:=
-\sum_{w\in\Omega_t(s)}
Q(w\mid \widetilde S_t=s)\log Q(w\mid \widetilde S_t=s).
\]
\end{proposition}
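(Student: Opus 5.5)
The statement is essentially an elementary conditioning computation on a finite state space, so I will prove it directly from the definition of conditional probability. I take $\Omega$ finite, $Q$ a probability measure on $2^\Omega$, and $\mathcal F_t$ generated by a partition of $\Omega$. I also assume $Q(w)>0$ for all $w$, or at least for the states considered; this is consistent with $Q$ being equivalent to the uniform $P$.

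\textbf{Step 1: what observing the price means.} Since $\widetilde S_t=E^Q[X\mid\mathcal F_t]$ is a function on $\Omega$, the event $\{\widetilde S_t=s\}$ is by definition the subset $\Omega_t(s)$. Hence observing $\widetilde S_t=s$ is exactly learning that $w\in\Omega_t(s)$.

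I would also note that $\widetilde S_t$ is $\mathcal F_t$-measurable, so $\Omega_t(s)$ is a union of atoms of $\mathcal F_t$. The $\sigma$-algebra $\sigma(\widetilde S_t)$ may therefore be coarser than $\mathcal F_t$. This is where ``which terminal differences the price can distinguish'' enters. Only the definition is needed here, not Proposition~\ref{prop:projection-general}, though that result identifies $\widetilde S_t$ with $P_{\mathcal F_t}X$.

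\textbf{Step 2: the conditional probability.} For any $s$ in the range of $\widetilde S_t$, we have $Q(\Omega_t(s))=\sum_{u\in\Omega_t(s)}Q(u)>0$. Bayes' rule then gives
\[
Q(w\mid \widetilde S_t=s)=\frac{Q(\{w\}\cap\Omega_t(s))}{Q(\Omega_t(s))}.
\]
If $w\in\Omega_t(s)$, the numerator is $Q(w)$. If $w\notin\Omega_t(s)$, the numerator is $0$. These conditional weights are nonnegative and sum to one over $\Omega_t(s)$, so they form a probability vector supported on $\Omega_t(s)$.

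\textbf{Step 3: residual uncertainty.} $H_t(s)$ is the Shannon entropy of this probability vector. It is well defined, using the convention $0\log 0=0$, and satisfies $0\le H_t(s)\le\log|\Omega_t(s)|$.

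Averaging over $s$ with weights $Q(\widetilde S_t=s)$ gives the conditional entropy $H(\mathrm{id}\mid\widetilde S_t)$. Combined with the chain rule $H(Q)=H(\widetilde S_t)+H(\mathrm{id}\mid \widetilde S_t)$, this links the proposition to the information-conservation decomposition of Theorem~\ref{thm:info-conservation}. I would include that link only as a remark.

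\textbf{Main obstacle.} The only delicate points are rigor issues, not computation. The first is ensuring $Q(\Omega_t(s))>0$, which I handle by positivity of $Q$ or by restricting to attained values $s$. The second is being precise that ``equivalent'' means equality of events, not equality of the $\sigma$-algebras $\sigma(\widetilde S_t)$ and $\mathcal F_t$. I would state explicitly that the price can merge distinct $\mathcal F_t$-atoms that happen to have equal conditional expectations.
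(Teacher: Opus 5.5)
Your proposal is correct and follows essentially the same route as the paper. Both arguments identify the event $\{\widetilde S_t=s\}$ with $\Omega_t(s)$, apply Bayes' formula $Q(w\mid \widetilde S_t=s)=Q(w)/Q(\Omega_t(s))$ with zero weight outside $\Omega_t(s)$, and read off $H_t(s)$ as the Shannon entropy of this conditional law. Your added care is sound and is left implicit in the paper: you secure $Q(\Omega_t(s))>0$ through full support of $Q$, you use the convention $0\log 0=0$, and you note that the price partition may be coarser than $\mathcal F_t$.
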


\begin{proof}
Since $\widetilde S_t$ is an $\mathcal F_t$-measurable function, observing the price value $s$ is equivalent to knowing that the true state $w$ falls into the set
\[
\Omega_t(s)=\{w\in\Omega:\widetilde S_t(w)=s\}.
\]
On this event, the conditional probability follows directly from Bayes’ formula:
\[
Q(w\mid \widetilde S_t=s)
=
\frac{Q(w)}{Q(\Omega_t(s))}
=
\frac{Q(w)}{\sum_{u\in\Omega_t(s)}Q(u)},
\qquad w\in\Omega_t(s),
\]
while for states outside this set the conditional probability is zero. By the definition of Shannon entropy, the residual uncertainty conditional on the price is therefore
\[
H_t(s)
=
-\sum_{w\in\Omega_t(s)}
Q(w\mid \widetilde S_t=s)\log Q(w\mid \widetilde S_t=s).
\]
This proves the proposition.
\end{proof}

Proposition \ref{prop:price-reveals-information} shows that price does not directly reveal the terminal state itself. Rather, through the projection operator, it induces an equivalence partition $\{\Omega_t(s)\}$ on the state space. This partition characterizes the finest structure that the market can distinguish under the current geometry and informational constraints. In other words, what price manifests is not “the terminal state itself,” but “which price-equivalence class the terminal state belongs to.”

This structure immediately implies a basic inequality. Since the price $\widetilde S_t$ is an $\mathcal F_t$-measurable function, it constitutes only a functional compression of the information filtration. Hence, in information-theoretic terms,
\[
I(\widetilde S_t)\le I(\mathcal F_t),
\]
and the inequality is generally strict. Equivalently, on the state space, the information filtration $\mathcal F_t$ corresponds to a finer partition, while price corresponds only to a coarser partition. Therefore, in expectation one has
\[
E^Q[H_t(s)] \ge H_{N|t},
\]
where $H_{N|t}$ denotes the conditional entropy under the full information filtration (i.e., when one knows which specific branch the system lies in at time $t$; see the conditional entropy expression in Theorem \ref{thm:info-conservation}). It should be noted, however, that for a particular value $s$, $H_t(s)$ need not be pointwise greater than $H_{N|t}$.

It follows that the price path is not only the evolution of a projected vector, but can also be understood as the evolution of a sequence of state partitions organized by probability geometry. As time passes, the refinement of the information filtration $\mathcal F_t$ changes the projection operator, which in turn induces an increasingly fine partition of the state space, so that terminal structural information gradually enters the observable domain of the market.

Connecting this mechanism to the field equation yields the complete observational picture of financial relativity, shown in Figure \ref{fig:tujing}:
\begin{figure}[H]
\centering
  \includegraphics[width=0.9\textwidth]{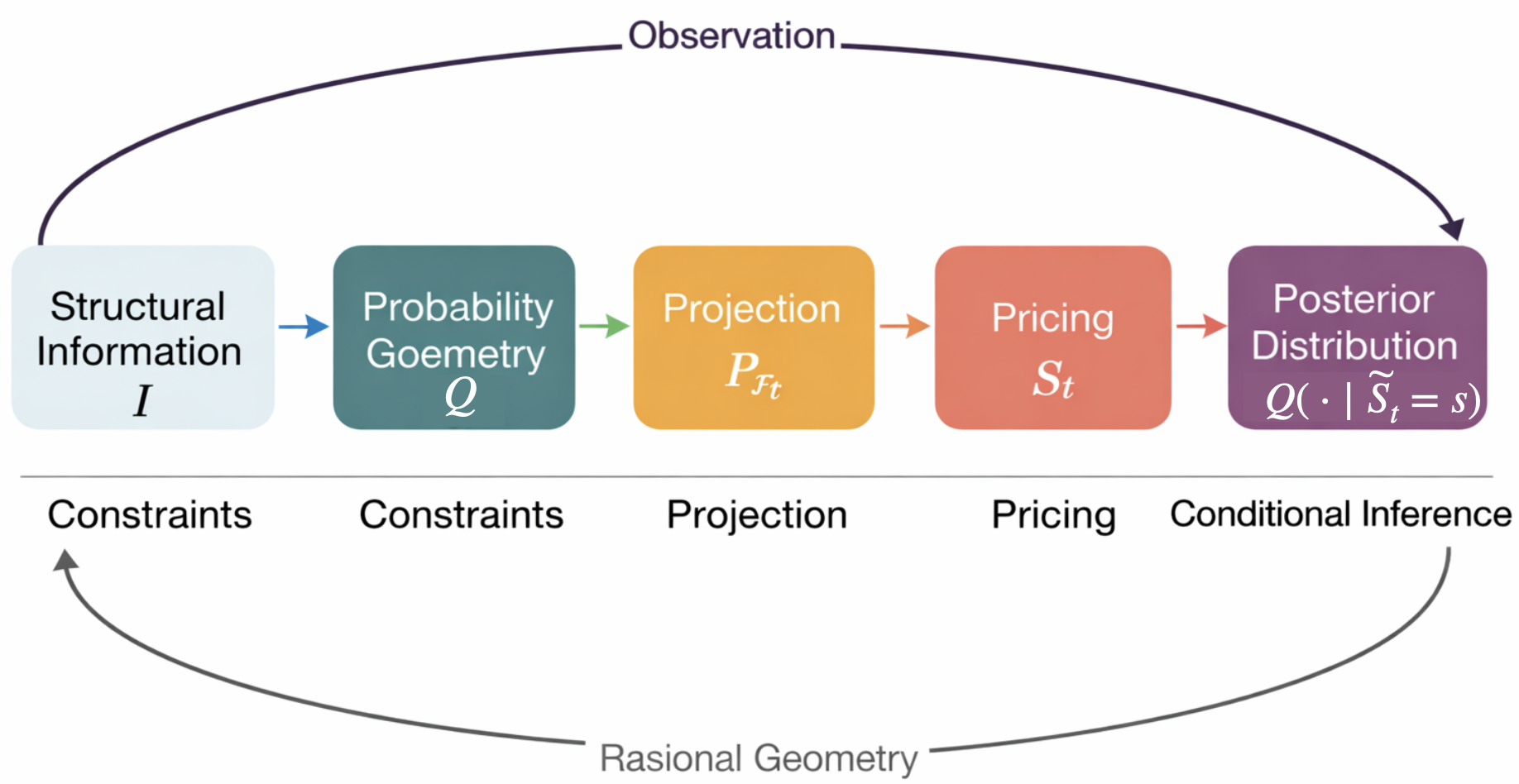}
  \caption{}\label{fig:tujing}
\end{figure}

That is, terminal structural information first generates probability geometry through constraints; probability geometry determines projection structure; projection structure further determines the price path; and price, as an observable quantity, in turn induces a conditional distribution over terminal states. Price is therefore not information itself, but the observable representation of information under geometric constraints.

In this sense, the statement that “probability geometry determines how information is manifested in prices” may be formulated as a proposition with clear structural meaning: probability geometry determines not only the form of price motion, but also the degree to which terminal structural information becomes distinguishable when it enters the market. More specifically, given a geometry $Q$, the equivalence partition induced by the price mapping compresses the state space, so that some terminal differences are grouped into the same price value and thus remain indistinguishable, while other differences are separated in advance and enter the observable domain. The “manifestation” of information is therefore not uniform, but unfolds gradually along the discriminating structure prescribed by probability geometry.

This perspective reveals the fundamental difference between financial relativity and traditional asset pricing theory. In the classical framework, price is usually understood as a function of probability and preferences, and the probability measure $Q$ is treated as an auxiliary computational object. In the present framework, by contrast, price is understood as a projection operator in probability geometry, and its dynamics reflect the gradual manifestation of terminal structural information under geometric constraints. The probability measure $Q$ is therefore no longer merely a pricing device, but becomes the generative mechanism of the market’s observable structure; and price is no longer merely a numerical sequence, but the geometric expression of terminal structure in time.

\section{Numerical Illustration: From Abstract Principles to a Computable Model}

The previous section formulated the core proposition of financial relativity in conceptual terms: terminal structural information determines the probability geometry of the market; probability geometry determines the natural motion of the price vector; and the evolution of the price vector determines how terminal structure is progressively revealed in the market. To transform these abstract statements into a visible, computable, and interpretable structure, this section constructs a concrete example that illustrates how terminal structure first distorts probability geometry and then alters pricing through geometric projection. For pedagogical clarity, we adopt a finite-state branching structure to represent the chain of “structural source—geometry—projection pricing—information revelation.”

We consider a three-period finite-state market. The terminal time is $T=2$, and the terminal state space is
\[
\Omega=\{\omega_1,\omega_2,\dots,w_8\}.
\]
At the intermediate time $t=1$, the market can only distinguish two branches:
\[
A=\{\omega_1,\omega_2,\omega_3\}, \qquad B=\{w_4,w_5,w_6,w_7,w_8\}.
\]
Accordingly, the information filtration is given by
\[
\mathcal F_0=\{\emptyset,\Omega\},\qquad \mathcal F_1=\sigma(A,B),\qquad \mathcal F_2=2^\Omega.
\]
This structure provides the simplest multi-period implementation of the “expanding information subspace” discussed earlier: at $t=0$, all terminal state directions are compressed into a constant vector; at $t=1$, the market can distinguish between two branches but not among states within each branch; at $t=2$, all structural differences are fully revealed.

\begin{itemize}

\item {\bf Step 1: Flat inertial frame.} Under the theoretical stance of this paper, if terminal primitive events are not distinguished by any structural information, the corresponding probability geometry is the unique uniform prior:
\[
P(w_i)=\frac18,\qquad i=1,\dots,8.
\]
Under this flat geometry,
\[
P(A)=\frac38,\qquad P(B)=\frac58.
\]
It is important to emphasize that these proportions do not indicate that the market “favors” or “disfavors” a branch; they merely reflect the mechanical aggregation of equally weighted primitive events. In other words, $P$ is the flat background of the state space rather than the market’s true understanding of terminal structure.

\item {\bf Step 2: Terminal structural source.} The introduction of structural information serves to curve the geometry. For transparency, we consider a block-symmetric source: terminal states within each branch remain undistinguished, while structural asymmetry is introduced across branches. Specifically, define
\[
\rho(w)=
\begin{cases}
\rho_A,& w\in A,\\
\rho_B,& w\in B,
\end{cases}
\qquad \text{with}\qquad 3\rho_A+5\rho_B=0.
\]
This balance condition ensures compatibility with the zero-mean normalization of the geometric potential; see Appendix \ref{app:block-symmetric-field} for a rigorous derivation. It does not imply that positive and negative information cancel out, but rather that the geometric potential rearranges probability mass without altering the total mass. For concreteness, we set
\[
\rho_A=5,\qquad \rho_B=-3.
\]
This means that terminal structure assigns relatively higher weight to branch $A$ and relatively lower weight to branch $B$. This “support” should not be interpreted in a subjective sense, but as reflecting that states in $A$ receive higher structural weight under logical, institutional, or fundamental constraints.

\item {\bf Step 3: Discrete field equation and geometric potential.} Following the field equation prototype, geometry is generated indirectly through a geometric potential $\phi$. Let
\[
Q(w)=\frac{P(w)e^{\phi(w)}}{\sum_{u\in\Omega}P(u)e^{\phi(u)}},
\]
where $\phi$ captures the deviation from the flat prior. In this example, we adopt a discrete Poisson-type equation:
\[
L\phi=\kappa\rho,
\]
where $L$ is the graph Laplacian and $\kappa>0$ is a coupling parameter. Appendix \ref{app:block-symmetric-field} shows that under block symmetry,
\[
\phi(w)=
\begin{cases}
\phi_A,& w\in A,\\
\phi_B,& w\in B,
\end{cases}
\qquad
\phi_A=\frac58\kappa,\qquad \phi_B=-\frac38\kappa.
\]
Thus, the geometric potential reorganizes the originally uniform state space into two layers: higher weight within $A$ and lower weight within $B$.

\item {\bf Step 4: Formation of probability geometry.} Substituting the potential into the definition yields
\[
Q(w_i)=\frac{e^{5\kappa/8}}{3e^{5\kappa/8}+5e^{-3\kappa/8}},\qquad i\in A,
\]
and
\[
Q(w_i)=\frac{e^{-3\kappa/8}}{3e^{5\kappa/8}+5e^{-3\kappa/8}},\qquad i\in B.
\]
Setting $\kappa=0.4$, we obtain
\[
Q(A)\approx 0.4722,\qquad Q(B)\approx 0.5280,
\]
compared to
\[
P(A)=0.375,\qquad P(B)=0.625.
\]
This illustrates the central geometric idea: probability is not arbitrarily adjusted, but systematically distorted by structural information through the geometric potential.

\item {\bf Step 5: Price as projection.} Consider an asset with terminal payoff
\[
X=(12,10,8,6,4,3,2,1).
\]
Ignoring discounting, the initial price is
\[
S_0=E^Q[X]\approx 6.412,
\]
while under $P$,
\[
E^{P}[X]=5.75.
\]
The difference arises purely from geometric distortion.

At $t=1$,
\[
S_1 = E^Q[X\mid \mathcal F_1]
    = (10,10,10,3.2,3.2,3.2,3.2,3.2).
\]
Thus,
\[
10=E^Q[X\mid A],\qquad 3.2=E^Q[X\mid B].
\]

\item {\bf Step 6: Information revelation through prices.} Since $S_1\in\{10,3.2\}$,
\[
\Omega_1(10)=A,\qquad \Omega_1(3.2)=B.
\]
Hence,
\[
Q(w\mid S_1=10)=Q(w\mid A),\qquad
Q(w\mid S_1=3.2)=Q(w\mid B).
\]
With symmetry,
\[
Q(w\mid A)=\frac13,\quad Q(w\mid B)=\frac15.
\]

Thus,
\[
H_t(10)=\log 3,\qquad H_t(3.2)=\log 5.
\]

Expected conditional entropy:
\[
E[H_t(S_t)] = Q(A)\log 3 + Q(B)\log 5.
\]

Total entropy under $Q$:
\[
H_2 \approx 2.9716\ \text{bits},
\quad
E[H_t(S_t)]\approx 1.9738\ \text{bits}.
\]

Thus revealed information:
\[
I(S_1)\approx 0.9978\ \text{bits}.
\]

This shows that price reveals about one bit of terminal structure.

\end{itemize}


In the conventional interpretation, prices exceed simple averages because the market requires compensation for risk; however, this explanation does not clarify why such compensation should manifest itself in precisely this particular structure over the state space. Financial relativity offers a different answer: prices change because the geometry changes, and the geometry changes because terminal structural sources alter the relative weights of state directions. The advantage of this perspective is that it transforms what would otherwise be an unobservable inference about preferences into a structural source that can be discussed, modeled, and extended.

At the same time, prices gradually release the information embedded in this geometry through projection operators, and this process can be measured precisely using entropy and mutual information. The asset pricing problem is therefore reframed—from one of “how to value assets given preferences” to one of “how information enters prices through geometry under given structural constraints.”

\section{Continuous Financial Field Equation: From Posterior Geometry to Price Geodesics}

Section 3 has established two fundamental principles of financial relativity. First, terminal structural information determines how the market probability geometry is curved; second, probability geometry determines how the price vector evolves. In the discrete setting, these principles are expressed through
\[
L\phi=\kappa\rho,
\qquad
Q(w)\propto P(w)e^{\phi(w)}
\]
and
\[
\widetilde S_t=E^Q[X\mid\mathcal F_t].
\]

However, the above discrete framework implicitly relies on a key assumption: terminal structural information is specified once and for all as a set of constraints, and the system does not receive new structural constraints during its evolution. As a result, total uncertainty remains constant. This corresponds to a closed system.

We now turn to a different setting: the market continuously receives structural constraints from outside during its evolution, so that terminal uncertainty itself is progressively recharacterized over time. In this case, probability geometry is no longer a once-for-all object, but becomes a dynamically evolving entity. In this sense, the continuous-time model does not describe the gradual revelation of pre-existing information, but rather the \emph{progressive introduction of structural constraints}. Accordingly, the discrete statement that “the structural source determines geometry” must be reformulated in continuous time as follows:
\emph{the dynamic updating of structural constraints reshapes posterior geometry, which in turn determines price propagation.}

The main objective of this section is to formalize this statement rigorously and to show why it can be interpreted as the field equation in continuous financial relativity.

\subsection{From Prior Geometry to Posterior Geometry}

Let the terminal payoff be represented by a real-valued random variable $X$ with prior density
$
p(x), \; x\in\mathbb R.
$
Here $p$ represents the flat probability geometry in the absence of structural constraints. As in the discrete case, it is not interpreted as an empirical “true” probability, but rather as the most symmetric reference frame.

In the continuous setting, the market does not observe $X$ directly. Instead, it receives an observation process driven by external structural constraints:
\[
d\xi_t=\sigma X\,dt+dB_t,
\qquad 0\le t\le T,
\]
where $B_t$ is a standard Brownian motion and $\sigma>0$ measures the strength of the constraints. Let
$
\mathcal F_t=\sigma(\xi_s:0\le s\le t),
$
so that $\mathcal F_t$ represents the accumulation of structural constraints over time.

Under this structure, the market probability geometry is defined as the posterior density
\[
q_t(x):=P(X=x\mid\mathcal F_t),
\]
that is, the least biased estimate of the state space under current constraints.

By Bayes’ rule,
\[
q_t(x)=
\dfrac{
p(x)\exp\!\left(\sigma x\,\xi_t-\frac12\sigma^2x^2t\right)
}{
\int_{\mathbb R}
p(y)\exp\!\left(\sigma y\,\xi_t-\frac12\sigma^2y^2t\right)\,dy
}.
\]

This expression shows that the posterior geometry remains an exponential tilt of the prior geometry:
\begin{equation}\label{eqn:iienagg}
	q_t(x)=\frac{p(x)e^{\Phi_t(x)}}{Z_t},
\qquad
\text{where}\,\,\,
\Phi_t(x):=\sigma x\,\xi_t-\frac12\sigma^2x^2t.
\end{equation}

The crucial difference from the discrete case is that the geometric potential $\Phi_t$ is no longer a static object determined by a one-shot structural source, but a stochastic field that evolves continuously with the arrival of constraints. As a result, probability geometry itself becomes a dynamic object.

\subsection{A Rigorous Formulation of the Continuous Financial Field Equation}

Define the posterior mean
\[
m_t:=E^Q[X\mid\mathcal F_t]=\int_{\mathbb R}x\,q_t(x)\,dx.
\]

Define the innovation process
\[
W_t^Q:=\xi_t-\sigma\int_0^t m_s\,ds,
\qquad
dW_t^Q=d\xi_t-\sigma m_t\,dt.
\]

By the innovation theorem in filtering theory, $W_t^Q$ is a Brownian motion with respect to $\mathcal F_t$. This decomposition implies that the observation process can be written as
\[
d\xi_t=\sigma m_t\,dt+dW_t^Q,
\]
where the first term is the predictable component implied by the current geometry, and the second term represents the unpredictable arrival of new constraints.

\begin{theorem}[Continuous Financial Field Equation]\label{thm:continuous-field-equation}
Under the above structure, the posterior density satisfies
\[
dq_t(x)=\sigma(x-m_t)q_t(x)\,dW_t^Q.
\]
\end{theorem}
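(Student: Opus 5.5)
The plan is to apply Itô's formula directly to the explicit exponential-tilt representation \eqref{eqn:iienagg}, $q_t(x)=p(x)e^{\Phi_t(x)}/Z_t$. Write the unnormalized density as $\pi_t(x):=p(x)\exp(\Phi_t(x))$, with $\Phi_t(x)=\sigma x\xi_t-\tfrac12\sigma^2x^2t$, and set $Z_t=\int\pi_t(y)\,dy$. The proof then has three steps.

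\textbf{Step 1: dynamics of the unnormalized density.} For fixed $x$, $\Phi_t(x)$ is an Itô process with $d\Phi_t=\sigma x\,d\xi_t-\tfrac12\sigma^2x^2\,dt$. Since $d\langle\xi\rangle_t=dt$, its quadratic variation is $d\langle\Phi\rangle_t=\sigma^2x^2\,dt$. Itô's formula applied to $e^{\Phi_t}$ therefore gives
\[
d\pi_t(x)=\pi_t(x)\bigl(d\Phi_t+\tfrac12 d\langle\Phi\rangle_t\bigr)=\sigma x\,\pi_t(x)\,d\xi_t .
\]
This is the Zakai equation in its simplest form: the drift cancels exactly.

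\textbf{Step 2: dynamics of the normalizer.} Integrate Step 1 over $x$ and interchange the $x$-integral with the stochastic differential by a stochastic Fubini argument. This gives $dZ_t=\sigma\bigl(\int y\,\pi_t(y)\,dy\bigr)d\xi_t=\sigma m_tZ_t\,d\xi_t$. Applying Itô's formula to $1/Z_t$ then yields
\[
d(Z_t^{-1})=-Z_t^{-1}\sigma m_t\,d\xi_t+Z_t^{-1}\sigma^2m_t^2\,dt .
\]

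\textbf{Step 3: quotient rule.} Apply the product rule to $q_t=\pi_t\cdot Z_t^{-1}$. The cross-variation term is $-\sigma^2x\,m_t\,q_t\,dt$. Collecting terms gives
\[
dq_t(x)=q_t(x)\bigl[\sigma(x-m_t)\,d\xi_t-\sigma^2m_t(x-m_t)\,dt\bigr]=\sigma(x-m_t)q_t(x)\bigl(d\xi_t-\sigma m_t\,dt\bigr).
\]
Substituting the innovation $dW^Q_t=d\xi_t-\sigma m_t\,dt$ gives exactly the claimed equation; this is the Kushner--Stratonovich equation for this linear observation model.

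\textbf{Main obstacle.} The algebra is routine. The real difficulty is justifying Step 2, namely differentiating under the integral sign and ensuring $m_t$ is finite. I would impose a moment condition on the prior, such as $\int x^2p(x)\,dx<\infty$, or simply assume $p$ has compact support. Under such a condition, $\pi_t(x)\le p(x)e^{\sigma|x||\xi_t|}$ is locally dominated, and $\pi_t(x)$ is continuous in $\xi_t$ and $t$. Better still, one can avoid stochastic Fubini entirely: for fixed $x$, $Z_t=F(t,\xi_t)$ with $F(t,u)=\int p(y)e^{\sigma yu-\sigma^2y^2t/2}\,dy$ smooth in $(t,u)$. Then ordinary calculus plus Itô's formula for $F(t,\xi_t)$ gives $dZ_t$ directly, because $\partial_tF+\tfrac12\partial_{uu}F=0$. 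I would use this heat-equation identity route, which also makes the drift cancellation transparent.
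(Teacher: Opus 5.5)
Your proposal is correct and follows the paper's proof essentially step for step: It\^o's formula on the unnormalized density gives $d\pi_t(x)=\sigma x\,\pi_t(x)\,d\xi_t$; the normalizer satisfies $dZ_t=\sigma m_tZ_t\,d\xi_t$; It\^o's formula for $Z_t^{-1}$ together with the cross-variation $-\sigma^2xm_tq_t\,dt$ gives the quotient dynamics; and substituting $d\xi_t=\sigma m_t\,dt+dW_t^Q$ cancels the drift. Your heat-equation justification of Step 2 (via $\partial_tF+\tfrac12\partial_{uu}F=0$) adds rigor that the paper omits, since it interchanges integral and differential without comment. One caveat: under only a second-moment assumption the bound $p(x)e^{\sigma|x||\xi_t|}$ need not be integrable, so for $t>0$ you should dominate using the factor $e^{-\sigma^2x^2t/2}$ instead; compact support of $p$ makes $F$ smooth up to $t=0$.
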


This equation describes how probability geometry is continuously reshaped as structural constraints are updated. The proof is given in Appendix \ref{app:theoremkkk}.

By comparison, the discrete field equation
\[
L\phi=\kappa\rho,
\qquad
Q(w)\propto P(w)e^{\phi(w)}
\]
represents the statement that “the structural source determines geometry,” where $\rho$ is the terminal structural source, $\phi$ the geometric potential, and $Q$ the resulting probability geometry. In that setting, constraints are specified once at the initial time, and the system evolves within a fixed geometry thereafter.

In the continuous case, this structure becomes
\[
dq_t(x)=\sigma(x-m_t)q_t(x)\,dW_t^Q,
\qquad
q_t(x)\propto p(x)e^{\Phi_t(x)}.
\]

Here, the mechanism of geometry formation changes fundamentally: structural constraints are no longer given once and for all, but enter the system gradually through the innovation term $dW_t^Q$, continuously reshaping the posterior geometry $q_t$. Meanwhile, $m_t$ provides the local center of the current geometry, ensuring that updates are anchored to the existing structure.

From a theoretical perspective, Theorem \ref{thm:continuous-field-equation} does not describe how prices evolve given a geometry, but rather addresses a more fundamental question—\emph{how geometry itself is generated and updated under the continuous introduction of constraints}. This is precisely analogous to the role of the Einstein field equation in general relativity: it does not describe geodesic motion, but determines how spacetime geometry is generated by matter. Likewise, the equation above is not a price evolution equation, but a law governing how market probability geometry is shaped by structural constraints.

\subsection{From Field Equation to Price Propagation}

Once the dynamics of geometry are specified, price evolution is no longer an independent object, but a consequence of the field equation. Define
\[
S_t=e^{-r_f(T-t)}E^Q[X\mid\mathcal F_t]
=
e^{-r_f(T-t)}m_t.
\]

Differentiating,
\[
\mathrm{d}m_t=\int_{\mathbb R}x\,\mathrm{d}q_t(x)\,dx
=
\sigma\int_{\mathbb R}x(x-m_t)q_t(x)\,\mathrm{d}x\,dW_t^Q.
\]

Since
\[
\int_{\mathbb R}x(x-m_t)q_t(x)\,dx
=
E^Q[X^2\mid\mathcal F_t]-m_t^2
=
\operatorname{Var}^Q(X\mid\mathcal F_t),
\]
we obtain
\[
dm_t=\sigma\,\operatorname{Var}^Q(X\mid\mathcal F_t)\,\mathrm{d}W_t^Q.
\]

Hence the price satisfies
\begin{equation}
\mathrm{d}S_t=r_fS_tdt+\Sigma_t\,\mathrm{d}W_t^Q,
\qquad
\Sigma_t=e^{-r_f(T-t)}\sigma\,\operatorname{Var}^Q(X\mid\mathcal F_t).
\end{equation}

This result has a clear structural interpretation: the price geodesic equation is not a primitive assumption, but a consequence of the field equation; volatility is not exogenously specified, but is endogenously generated by the uncertainty of the current probability geometry. This provides a precise mathematical formulation of the statement that “geometry determines how prices move.”

\subsection{A Continuous Model: A Simple Implementation of the Financial Field Equation}

To illustrate the above structure, consider a simple setting with clear economic meaning: the terminal fundamental takes only two possible values, while the market continuously receives new structural constraints and updates its probability assessment accordingly.

Such a structure is widespread in real financial markets. Many economic and financial outcomes are discrete in nature, yet the associated probabilities are not known in advance; rather, they are continuously revised over time. Market participants do not face a static known distribution, but an evolving constraint environment shaped by institutional arrangements, financing conditions, policy expectations, disclosure schedules, and trading behavior. In this process, the terminal states themselves remain unchanged, but structural information about them continuously enters the system, leading to a reallocation of probability weights. Probability is therefore no longer an exogenous object, but a posterior geometry that evolves under the ongoing introduction of constraints; prices and volatilities are the observable manifestations of this geometric adjustment.

For simplicity, let
$
X\in\{L,H\},\quad H>L,
$
and suppose the market observes
$
\xi_t=\sigma Xt+B_t.
$
Let the posterior probability be
$
\pi_t=Q(X=H\mid\mathcal F_t).
$

From Theorem \ref{thm:continuous-field-equation},
\[
d\pi_t
=
\sigma(H-L)\pi_t(1-\pi_t)dW_t^Q.
\]

This equation shows that the speed of probability updating is governed by current uncertainty. When $\pi_t\approx 1/2$, uncertainty is maximal and the geometry is most sensitive to new constraints; when $\pi_t$ approaches $0$ or $1$, the geometry stabilizes and becomes less responsive. The discounted price is
$
\widetilde S_t=L+(H-L)\pi_t.
$

Thus,
\[
d\widetilde S_t
=
\sigma(H-L)^2\pi_t(1-\pi_t)dW_t^Q.
\]

The corresponding volatility is
$
\Sigma_t=\sigma(H-L)^2\pi_t(1-\pi_t).
$
This yields the key structural relation:
\[
\Sigma_t\propto \pi_t(1-\pi_t),
\]
namely, price volatility is endogenously determined by current uncertainty. Importantly, within this open-system framework, the evolution of $\pi_t$ should not be interpreted as the gradual revelation of pre-existing information, but as the response of posterior geometry to the continuous introduction of structural constraints. Accordingly, price dynamics reflect not only changes in uncertainty, but also the manner and timing with which constraints enter the system.

This prediction is consistent with the volatility patterns observed in a wide range of event-driven assets. For example, prior to major corporate restructurings, credit events, regulatory decisions, or the concentrated release of key operating information, markets are typically characterized by high uncertainty and elevated volatility; as outcomes become clearer and constraints tighten into a more stable geometry, volatility declines. In this sense, events are not exogenous shocks, but concentrated realizations of structural constraints. While traditional continuous-time models typically rely on exogenous stochastic volatility or jump processes to capture such “rise-then-fall” patterns, the financial relativity framework explains them as the endogenous adjustment of posterior geometry under the continuous arrival of constraints.

%
%
%
%
%
%
%

\section{Conclusion}

This paper develops a new interpretive framework for asset pricing from an information–geometric perspective, which we refer to as \emph{financial relativity}. Its central proposition is that terminal structural information determines how probability geometry is curved, while probability geometry determines how information is manifested in prices. Within this framework, the risk-neutral measure $Q$ is reinterpreted as the market probability geometry formed under structural constraints, and prices are understood as geometric projections of terminal payoffs onto the current information subspace. Price dynamics thus represent the progressive manifestation of structural information under geometric constraints.

Building on this idea, the paper makes three main contributions. First, through a structural analogy with physical relativity, it reorganizes the core objects of asset pricing theory—terminal payoffs, time–state–information filtration, probability measures, and price propagation—into a unified geometric framework. In this mapping, the flat prior corresponds to the unique inertial background, probability geometry to curvature, conditional expectations to local projections, and the unit conditional expectation of discounted returns to a local propagation invariant. Second, the paper introduces a prototype of the financial field equation, representing terminal structural information as the source of geometry, and provides explicit mechanisms for geometry formation in both closed systems (via discrete geometric potentials) and open systems (via the evolution of posterior densities). Third, it shows that price is not only a consequence of geometry but also the channel through which information becomes observable: in the discrete case, prices compress the state space through equivalence-class partitions and reveal terminal structure in a computable manner; in the continuous setting, the dynamic updating of posterior geometry induces a price geodesic equation, making volatility an endogenous function of posterior uncertainty.

The value of this framework lies primarily in shifting the explanatory focus of asset pricing theory. Rather than starting from probabilities and preferences, the present approach places terminal structure, probability geometry, and price manifestation along a single causal chain: structural constraints generate geometry, geometry organizes price propagation, and price paths transform structural information into observable quantities. This reorganization does not replace existing approaches—such as no-arbitrage pricing, stochastic discount factors, or behavioral models—but instead provides them with a deeper unifying background. In particular, when $Q$ is interpreted as probability geometry rather than a purely technical device, previously disconnected elements—such as the fundamental theorem of asset pricing, risk-neutral dynamics, information filtering, and volatility behavior—acquire a common theoretical foundation. Moreover, this framework offers a unified interpretation for phenomena that have often been studied separately: price changes can be understood as geometric reweighting induced by structural constraints rather than solely as preference shifts; information revelation emerges as a process of directional compression and progressive differentiation under geometry; and price volatility, risk premia, event-driven fluctuations, posterior contraction, and increasing state distinguishability can all be viewed as manifestations of the dynamic adjustment of probability geometry.

More importantly, financial relativity is not a closed model but an extensible research program. Its potential directions include at least four avenues. First, at the theoretical level, one may develop more general geometric operators and posterior dynamics, incorporating nonlinear field equations, information-geometric curvature, and high-dimensional state structures, thereby turning probability geometry into a fully analytical tool. Second, at the modeling level, the framework can be extended to multi-factor terminal structures, jump processes, complex constraint environments, and multiple-geometry selection problems in incomplete markets. Third, at the empirical level, one may estimate the evolution of posterior geometry using option-implied distributions, prediction markets, analyst expectations, and high-frequency data, and test structural relations between volatility and posterior uncertainty, as well as the origin of apparent risk premia across different reference frames. Fourth, from an informational perspective, one can construct operational measures of market efficiency by comparing the information content revealed by prices—quantified, for example, through conditional entropy or mutual information induced by price partitions—with the total information embedded in the filtration. Such measures would allow one to characterize the efficiency with which prices express terminal structure, and provide new tools for studying information asymmetry, delayed price adjustment, and market microstructure. In sum, financial relativity proposes a new organizing principle: to unify terminal structural information, probability measures, and price dynamics within a geometric framework, thereby offering a coherent, extensible, and empirically testable foundation for asset pricing theory.
\bibliographystyle{apalike}
\bibliography{financial_relativity_refs_revised}

\appendix

\section{Projection Proposition and a Three-Dimensional Illustration}\label{app:projection}

This appendix provides a rigorous proof of the statement “price is a projection” discussed in Section \ref{sec:analogy}, and illustrates it through a three-dimensional state space example. The projection interpretation of conditional expectation used in the main text relies on the following fundamental proposition.

\begin{proposition}[Projection Property of Conditional Expectation]\label{prop:projection-general}
Let $(\Omega,\mathcal F,Q)$ be a finite probability space, and define the inner product on $L^2(\Omega,\mathcal F,Q)$ as
\[
\langle Y,Z\rangle_Q=E^Q[YZ].
\]
Given any sub-$\sigma$-algebra $\mathcal G\subseteq\mathcal F$, for any $X\in L^2(\Omega,\mathcal F,Q)$, the random variable $E^Q[X\mid\mathcal G]$ is the unique element in the subspace $L^2(\mathcal G)$ satisfying
\[
\langle X-E^Q[X\mid\mathcal G],Y\rangle_Q=0,
\qquad \forall Y\in L^2(\mathcal G).
\]
That is, $E^Q[X\mid\mathcal G]$ is precisely the orthogonal projection of $X$ onto $L^2(\mathcal G)$.
\end{proposition}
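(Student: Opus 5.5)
The plan is to work directly in the finite setting, where everything reduces to linear algebra. Since $\Omega$ is finite, the sub-$\sigma$-algebra $\mathcal G$ is generated by a finite partition $\{G_1,\dots,G_k\}$ of $\Omega$ into atoms. Then $L^2(\mathcal G)$ is exactly the span of the indicators $\mathbf 1_{G_1},\dots,\mathbf 1_{G_k}$, because a $\mathcal G$-measurable function is one that is constant on each atom. I will use the elementary formula
\[
E^Q[X\mid\mathcal G]=\sum_{j:\,Q(G_j)>0}\frac{E^Q[X\mathbf 1_{G_j}]}{Q(G_j)}\,\mathbf 1_{G_j}.
\]
On atoms of zero $Q$-mass the value is arbitrary; all identities will be understood $Q$-a.s., i.e. in $L^2(Q)$, where such functions are identified.

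First, membership: the formula is visibly a linear combination of the $\mathbf 1_{G_j}$, so it lies in $L^2(\mathcal G)$. Second, orthogonality. By linearity of the inner product it suffices to test against $Y=\mathbf 1_{G_j}$. For atoms with $Q(G_j)>0$,
\[
\langle X-E^Q[X\mid\mathcal G],\mathbf 1_{G_j}\rangle_Q=E^Q[X\mathbf 1_{G_j}]-\frac{E^Q[X\mathbf 1_{G_j}]}{Q(G_j)}Q(G_j)=0,
\]
using that the indicators of distinct atoms are disjoint. For atoms with $Q(G_j)=0$, the inner product vanishes trivially.

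Third, uniqueness. Suppose $Z_1,Z_2\in L^2(\mathcal G)$ both satisfy the orthogonality condition. Then $Z_1-Z_2\in L^2(\mathcal G)$ and is orthogonal to every element of $L^2(\mathcal G)$, in particular to itself. Hence $E^Q[(Z_1-Z_2)^2]=0$, so $Z_1=Z_2$ $Q$-a.s., which is equality in $L^2$. Finally, I will invoke the standard characterization of orthogonal projection in a finite-dimensional inner product space: the projection of $X$ onto a subspace $V$ is the unique $v\in V$ with $X-v\perp V$. This identifies $E^Q[X\mid\mathcal G]$ with $P_{\mathcal G}X$. The minimal-distance property follows from Pythagoras if one wants it.

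The only delicate point is degeneracy. If $Q$ puts zero mass on some states, then $\langle\cdot,\cdot\rangle_Q$ is only positive semidefinite on functions. Uniqueness therefore holds only modulo $Q$-null functions, and I will state it explicitly in that quotient sense. Alternatively, one can note that in the paper's setting $Q\sim P$ is equivalent to the uniform prior, so every state has positive mass and the issue disappears. Everything else is routine.
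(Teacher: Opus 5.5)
Your proof is correct and follows essentially the same route as the paper: orthogonality is checked on indicators of $\mathcal G$-sets (you verify it from the explicit atomic formula, where the paper quotes the defining property of conditional expectation) and extended by linearity, and uniqueness comes from testing the difference against itself. Your explicit treatment of $Q$-null atoms, with uniqueness understood $Q$-a.s., is a small refinement that the paper's argument passes over silently.
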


\begin{proof}
By the definition of conditional expectation, $E^Q[X\mid\mathcal G]\in L^2(\mathcal G)$, and for any $A\in\mathcal G$,
\[
E^Q\big[(X-E^Q[X\mid\mathcal G])\mathbf 1_A\big]=0.
\]
Since any $Y\in L^2(\mathcal G)$ in a finite probability space can be expressed as a linear combination of $\mathcal G$-measurable indicator functions, it follows that
\[
E^Q\big[(X-E^Q[X\mid\mathcal G])Y\big]=0,
\qquad \forall Y\in L^2(\mathcal G),
\]
that is,
\[
\langle X-E^Q[X\mid\mathcal G],Y\rangle_Q=0,
\qquad \forall Y\in L^2(\mathcal G).
\]
This shows that $X-E^Q[X\mid\mathcal G]$ is orthogonal to the subspace $L^2(\mathcal G)$, and hence $E^Q[X\mid\mathcal G]$ is the orthogonal projection.

Uniqueness follows from the general property of projections in Hilbert spaces. If $Z\in L^2(\mathcal G)$ also satisfies
\[
\langle X-Z,Y\rangle_Q=0,
\qquad \forall Y\in L^2(\mathcal G),
\]
then letting $Y=Z-E^Q[X\mid\mathcal G]$ yields
\[
\|Z-E^Q[X\mid\mathcal G]\|_Q^2=0,
\]
so $Z=E^Q[X\mid\mathcal G]$, completing the proof.
\end{proof}

A direct implication of this proposition is that when the market can only distinguish equivalence classes under a given information partition, the price is precisely the optimal geometric representation of the terminal payoff in the subspace where states within each class must take equal values.

\subsection{A Three-Dimensional Illustration}

Consider the terminal state space
\[
\Omega=\{\omega_1,\omega_2,\omega_3\},
\]
with the flat prior
\[
Q(\omega_1)=Q(\omega_2)=Q(\omega_3)=\frac{1}{3}.
\]
Let the terminal payoff vector be
\[
X=(x_1,x_2,x_3).
\]

At the initial time, the market cannot distinguish any of the three states, so the admissible subspace consists only of constant vectors:
\[
\mathrm{span}\{(1,1,1)\}.
\]
Thus, the current price is
\[
\widetilde S_0=E^Q[X]=\dfrac{x_1+x_2+x_3}{3},
\]
and the projection vector is
\[
P_{\mathcal F_0}X
=
\Big(\dfrac{x_1+x_2+x_3}{3},\dfrac{x_1+x_2+x_3}{3},\dfrac{x_1+x_2+x_3}{3}\Big).
\]

In the next period, the market can distinguish $\omega_1$, but still cannot distinguish between $\omega_2$ and $\omega_3$. The information partition becomes
\[
\{\omega_1\},\qquad \{\omega_2,\omega_3\}.
\]
The corresponding admissible subspace is
\[
\mathrm{span}\{(1,0,0),(0,1,1)\}.
\]
Therefore,
\[
P_{\mathcal F_1}X=
\Big(x_1,\dfrac{x_2+x_3}{2},\dfrac{x_2+x_3}{2}\Big).
\]
This vector represents the price at the second period: the market can price $\omega_1$ separately, but assigns the same price to $\omega_2$ and $\omega_3$.

Figure \ref{fig:projection} illustrates this three-dimensional example geometrically. The terminal payoff vector $X$ is first projected onto the one-dimensional subspace $\mathrm{span}\{(1,1,1)\}$ at the initial time $t=0$. When the market enters the second period and can distinguish between $\omega_1$ and $\{\omega_2,\omega_3\}$, the projection subspace expands to $\mathrm{span}\{(1,0,0),(0,1,1)\}$, and the price representation is accordingly refined. However, prices are always represented as vectors of the same dimension as the terminal payoff. This figure provides an intuitive visualization of the central statement in the main text: “price is projection.”

\begin{figure}[H]
\centering
  \includegraphics[width=0.8\textwidth]{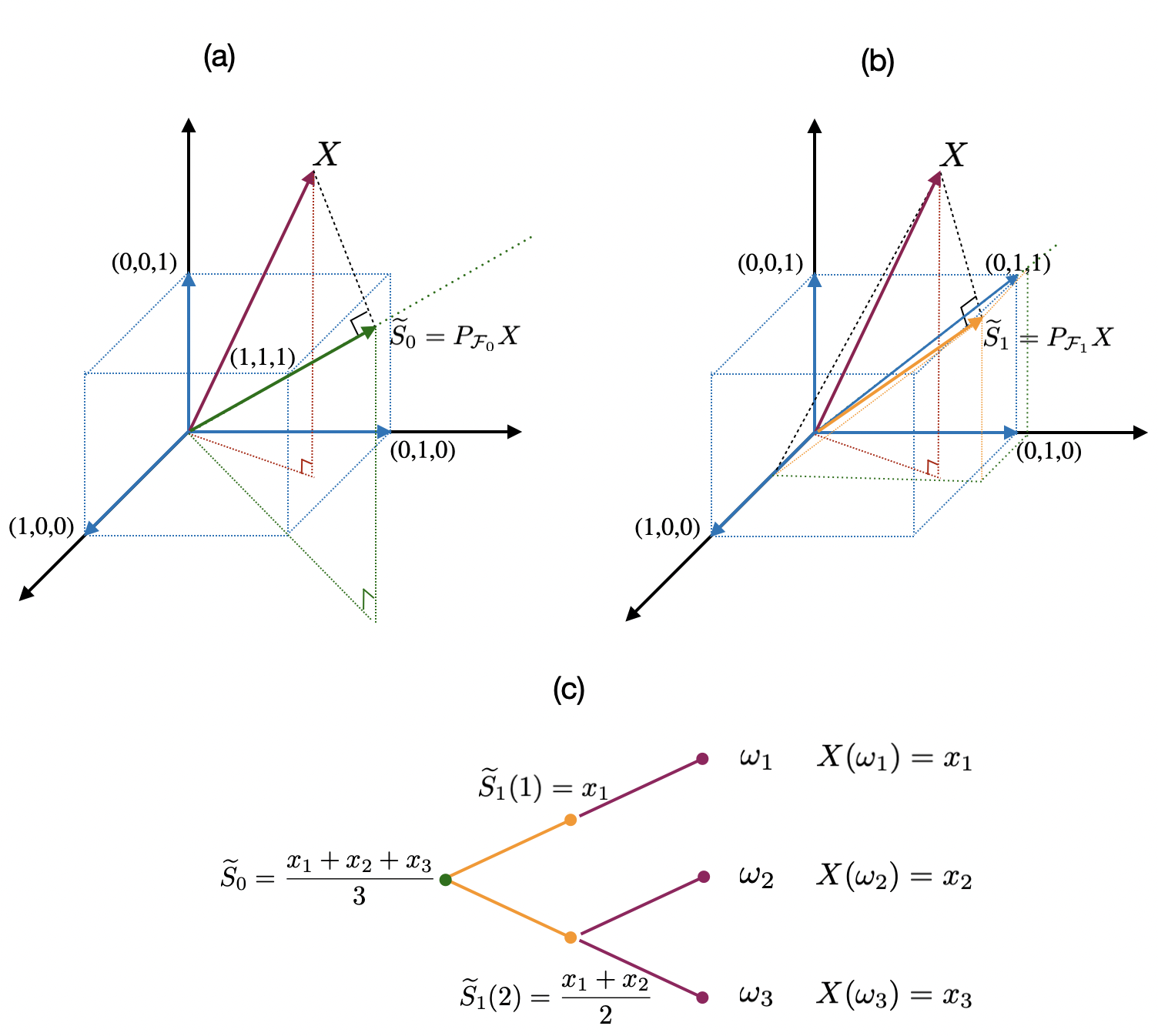}
  \caption{Schematic illustration of the price process as the projection of the terminal payoff vector onto information subspaces. (a) At the initial time $t=0$, the discounted price $\widetilde S_0$ is the orthogonal projection of $X$ onto the one-dimensional subspace $\mathrm{span}\{(1,1,1)\}$. (b) At the intermediate time $t=1$, the discounted price $\widetilde S_1$ is the orthogonal projection of $X$ onto the two-dimensional subspace $\mathrm{span}\{(1,0,0),(0,1,1)\}$ associated with the filtration $\mathcal F_1$. (c) From the perspective of a standard tree representation, the price process is expressed as a random variable whose values are given by conditional expectations, corresponding to the numerical realization of these projections along different information branches.}
  \label{fig:projection}
\end{figure}
\section{An Entropic Proof of the Information Conservation Law}\label{app:info-conservation}

This appendix provides a rigorous formulation and proof of the information conservation law used in Section 3 of the main text.

Let the terminal state space be a finite set
\[
\Omega_N=\{\omega_1,\dots,\omega_m\},
\]
equipped with a probability measure $P$ such that $P(\omega)>0$ for all $\omega\in\Omega_N$. At period $n$, the information structure is induced by a finite partition of the terminal state space:
\[
\Pi_n=\{A_1^{(n)},A_2^{(n)},\dots,A_{k_n}^{(n)}\},
\]
where each $A_i^{(n)}\subset \Omega_N$ is disjoint, and
\[
\bigcup_{i=1}^{k_n} A_i^{(n)}=\Omega_N.
\]
The corresponding $\sigma$-algebra at time $n$ is
\[
\mathcal F_n=\sigma(\Pi_n).
\]
Denote the probability of each branch by
\[
p_i^{(n)}:=P(A_i^{(n)}),\qquad i=1,\dots,k_n.
\]

\begin{definition}[Total terminal entropy]
The total entropy of the terminal state space is defined as
\[
H_N:=-\sum_{\omega\in\Omega_N} P(\omega)\ln P(\omega).
\]
\end{definition}

\begin{definition}[Branch entropy at time $n$]
The entropy revealed by the partition $\Pi_n$ at time $n$ is defined as
\[
H_n:=-\sum_{i=1}^{k_n} p_i^{(n)}\ln p_i^{(n)}.
\]
\end{definition}

\begin{definition}[Residual terminal entropy after time $n$]
Under the information structure at time $n$, conditional on being in branch $A_i^{(n)}$, there remains uncertainty about the terminal state within that branch. The corresponding conditional entropy is
\[
H\!\left(\Omega_N\mid A_i^{(n)}\right)
:=
-\sum_{\omega\in A_i^{(n)}} P(\omega\mid A_i^{(n)})\ln P(\omega\mid A_i^{(n)}).
\]
The expected residual terminal entropy after time $n$ is therefore defined as
\[
H_{N\mid n}
:=
\sum_{i=1}^{k_n} p_i^{(n)}\,H\!\left(\Omega_N\mid A_i^{(n)}\right).
\]
\end{definition}

\begin{definition}[Information revealed at time $n$]
The amount of information revealed by the information structure $\mathcal F_n$ is defined as
\[
I(\mathcal F_n):=H_N-H_{N\mid n}.
\]
\end{definition}

\begin{theorem}[Information Conservation Law]\label{thm:info-conservation}
In the above finite-state multi-period setting, for any information structure $\mathcal F_n$ induced by a partition $\Pi_n$ of the terminal state space, we have
\[
H_N = H_n + H_{N\mid n}.
\]
Equivalently,
\[
I(\mathcal F_n)=H_n.
\]
Thus, total terminal uncertainty decomposes exactly into the uncertainty revealed by the branching structure at time $n$ and the residual uncertainty within each branch. If no new external structural constraints are introduced, this decomposition implies that total information remains conserved over time.
\end{theorem}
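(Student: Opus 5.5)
The plan is to prove the chain rule for Shannon entropy directly, by splitting the logarithm of each atom's probability. Fix $n$ and the partition $\Pi_n$. Every $\omega\in\Omega_N$ lies in exactly one block $A_i^{(n)}$. Since $P(\omega)>0$ for all $\omega$, each $p_i^{(n)}>0$, so the conditional probabilities are well defined, and for $\omega\in A_i^{(n)}$ we have
\[
P(\omega)=p_i^{(n)}\,P(\omega\mid A_i^{(n)}),\qquad \ln P(\omega)=\ln p_i^{(n)}+\ln P(\omega\mid A_i^{(n)}).
\]
I will substitute this into the definition of $H_N$ and reorganize the sum over $\Omega_N$ as a double sum $\sum_{i=1}^{k_n}\sum_{\omega\in A_i^{(n)}}$. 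This regrouping is legitimate because the blocks are disjoint and cover $\Omega_N$.

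The sum then splits into two terms. For the first, $\ln p_i^{(n)}$ does not depend on $\omega$ within a block, and $\sum_{\omega\in A_i^{(n)}}P(\omega)=p_i^{(n)}$. So the first term equals $-\sum_i p_i^{(n)}\ln p_i^{(n)}=H_n$. For the second, I write $P(\omega)=p_i^{(n)}P(\omega\mid A_i^{(n)})$ and pull out $p_i^{(n)}$. This gives
\[
\sum_i p_i^{(n)}\Bigl(-\sum_{\omega\in A_i^{(n)}}P(\omega\mid A_i^{(n)})\ln P(\omega\mid A_i^{(n)})\Bigr)=H_{N\mid n}.
\]
Adding the two terms gives $H_N=H_n+H_{N\mid n}$. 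The equivalent form $I(\mathcal F_n)=H_n$ then follows immediately from the definition $I(\mathcal F_n)=H_N-H_{N\mid n}$.

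For the remark about conservation over time, I will note that $H_N$ does not depend on $n$, since $P$ and $\Omega_N$ are fixed. As the partitions refine with $n$, the revealed part $H_n$ and the residual part $H_{N\mid n}$ trade off against each other, while their sum stays constant. This is just the identity applied at each $n$.

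No step is a real obstacle. The only point needing care is the bookkeeping in the regrouping, and checking that no $\ln 0$ terms arise, which the assumption $P(\omega)>0$ guarantees.
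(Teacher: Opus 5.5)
Your proposal is correct and is essentially the paper's argument: both rest on splitting $\ln P(\omega)=\ln p_i^{(n)}+\ln P(\omega\mid A_i^{(n)})$ for $\omega\in A_i^{(n)}$ and regrouping the sum over the blocks of $\Pi_n$. The only difference is the direction of the computation. The paper expands $H_{N\mid n}$ and arrives at $H_{N\mid n}=H_N-H_n$, whereas you expand $H_N$ and arrive at $H_n+H_{N\mid n}$; your explicit check that $P(\omega)>0$ rules out $\ln 0$ terms is a small point the paper leaves implicit.
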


\begin{proof}
By the definition of conditional entropy,
\[
H_{N\mid n}
=
\sum_{i=1}^{k_n} p_i^{(n)}
\left(
-\sum_{\omega\in A_i^{(n)}} P(\omega\mid A_i^{(n)})\ln P(\omega\mid A_i^{(n)})
\right).
\]
Multiplying the outer probability into the inner sum yields
\[
H_{N\mid n}
=
-\sum_{i=1}^{k_n}\sum_{\omega\in A_i^{(n)}}
p_i^{(n)}P(\omega\mid A_i^{(n)})\ln P(\omega\mid A_i^{(n)}).
\]
Since
\[
p_i^{(n)}P(\omega\mid A_i^{(n)})=P(\omega),
\qquad \omega\in A_i^{(n)},
\]
we obtain
\[
H_{N\mid n}
=
-\sum_{i=1}^{k_n}\sum_{\omega\in A_i^{(n)}}
P(\omega)\ln P(\omega\mid A_i^{(n)}).
\]
Moreover,
\[
P(\omega\mid A_i^{(n)})=\frac{P(\omega)}{p_i^{(n)}},
\qquad \omega\in A_i^{(n)},
\]
so that
\[
\ln P(\omega\mid A_i^{(n)})
=
\ln P(\omega)-\ln p_i^{(n)}.
\]
Substituting back, we obtain
\[
H_{N\mid n}
=
-\sum_{i=1}^{k_n}\sum_{\omega\in A_i^{(n)}}
P(\omega)\bigl[\ln P(\omega)-\ln p_i^{(n)}\bigr].
\]
Separating the two terms gives
\[
H_{N\mid n}
=
-\sum_{\omega\in\Omega_N}P(\omega)\ln P(\omega)
+
\sum_{i=1}^{k_n}\sum_{\omega\in A_i^{(n)}}P(\omega)\ln p_i^{(n)}.
\]
The first term is precisely $H_N$. In the second term, $\ln p_i^{(n)}$ is constant within each branch, hence
\[
\sum_{\omega\in A_i^{(n)}}P(\omega)\ln p_i^{(n)}
=
p_i^{(n)}\ln p_i^{(n)}.
\]
Therefore,
\[
H_{N\mid n}
=
H_N+\sum_{i=1}^{k_n}p_i^{(n)}\ln p_i^{(n)}.
\]
By the definition of $H_n$,
\[
H_n=-\sum_{i=1}^{k_n}p_i^{(n)}\ln p_i^{(n)},
\]
so that
\[
H_{N\mid n}=H_N-H_n.
\]
Rearranging yields
\[
H_N=H_n+H_{N\mid n}.
\]
Finally, by the definition of revealed information,
\[
I(\mathcal F_n)=H_N-H_{N\mid n},
\]
we immediately obtain
\[
I(\mathcal F_n)=H_n.
\]
This completes the proof.
\end{proof}

\section{Solution of the Discrete Field Equation under a Block-Symmetric Source}\label{app:block-symmetric-field}

This appendix retains only the technical results that are genuinely needed for the discrete example in Section 4. In particular, it explains why, under a block-symmetric source, the geometric potential can be written as two blockwise constants, and why it must satisfy
\[
3\phi_A+5\phi_B=0.
\]
The purpose here is not to re-prove the general proposition that “conditional expectation equals projection,” but to provide a closed-form geometric solution for the eight-state, three-period example in the main text.

\subsection{Setup}

Let the terminal state space be
\[
\Omega=\{\omega_1,\omega_2,\dots,w_8\},
\]
and let the first-period information structure partition it into two branches:
\[
A=\{\omega_1,\omega_2,\omega_3\},\qquad
B=\{w_4,w_5,w_6,w_7,w_8\}.
\]
The structureless prior is given by
\[
P(w_i)=\frac18,\qquad i=1,\dots,8.
\]

Assume that the terminal structural source takes the block-symmetric form
\[
\rho(w)=
\begin{cases}
\rho_A,& w\in A,\\[4pt]
\rho_B,& w\in B.
\end{cases}
\]
The balancing condition adopted in the main text is
\[
3\rho_A+5\rho_B=0.
\]
This condition means that the source term merely rearranges probability geometry as a whole, without introducing any additional shift in total mass.

The geometric potential function $\phi$ is defined through the logarithmic geometric relation
\[
Q(w)=\frac{P(w)e^{\phi(w)}}{\sum_{u\in\Omega}P(u)e^{\phi(u)}}.
\]
To determine $\phi$ from the source term, the main text adopts the prototype discrete field equation
\[
L\phi=\kappa\rho,
\]
where $\kappa>0$ is the coupling strength and $L$ is a discrete Laplacian operator on the state space.

\subsection{Form of the Block-Symmetric Solution}

Since the source term is constant within each block, and the prior $P$ treats states symmetrically within each block, the field equation and the prior together induce permutation invariance within blocks. Accordingly, one may seek the geometric potential within the class of block-symmetric solutions, namely,
\[
\phi(w)=
\begin{cases}
\phi_A,& w\in A,\\[4pt]
\phi_B,& w\in B.
\end{cases}
\]
This is not an additional economic assumption, but the most natural solution class under a block-symmetric source and a block-symmetric background. Allowing asymmetric terms within each block would correspond to a finer layer of informational structure, which lies beyond the purpose of the example in the main text.

\subsection{Why \texorpdfstring{$3\phi_A+5\phi_B=0$}{3phiA+5phiB=0} Must Hold}

The key point is that the geometry $Q$ generated by $\phi$ is invariant under the addition of a constant to $\phi$. Indeed, for any constant $c$,
\[
\frac{P(w)e^{\phi(w)+c}}{\sum_{u}P(u)e^{\phi(u)+c}}
=
\frac{P(w)e^{\phi(w)}}{\sum_{u}P(u)e^{\phi(u)}}.
\]
Thus, $\phi$ is not unique; it is meaningful only up to an additive constant. To select a unique representative from this equivalence class, one must impose a \textbf{gauge condition}.

In the present block-symmetric example, the most natural gauge condition is that the geometric potential has zero mean with respect to the flat prior $P$, namely,
\[
E^{P}[\phi]=0.
\]
Since $P(w_i)=1/8$, this means
\[
\frac18\sum_{i=1}^8 \phi(w_i)=0.
\]
Under a block-symmetric solution,
\[
\sum_{i=1}^8 \phi(w_i)=3\phi_A+5\phi_B,
\]
and therefore
\[
3\phi_A+5\phi_B=0.
\]
This is the origin of the relation used in the main text. It does not follow directly from the balancing condition on the source term, but rather from the \textbf{gauge choice for the geometric potential}; the balancing condition
\[
3\rho_A+5\rho_B=0
\]
ensures that the block-symmetric field equation is compatible with this gauge.

\subsection{Solving the Block-Symmetric Field Equation}

To obtain an explicit solution, the main text adopts the simplest cross-block Laplacian structure: each $w_i\in A$ interacts only with states in block $B$, and each $w_j\in B$ interacts only with states in block $A$. Setting each cross-block edge weight equal to $1$, for any $w_i\in A$,
\[
(L\phi)(w_i)=\sum_{u\in B}\bigl(\phi_A-\phi_B\bigr)=5(\phi_A-\phi_B),
\]
while for any $w_j\in B$,
\[
(L\phi)(w_j)=\sum_{u\in A}\bigl(\phi_B-\phi_A\bigr)=3(\phi_B-\phi_A).
\]
Hence the field equation
\[
L\phi=\kappa\rho
\]
reduces to
\[
5(\phi_A-\phi_B)=\kappa\rho_A,
\qquad
3(\phi_B-\phi_A)=\kappa\rho_B.
\]
These two equations are compatible if and only if
\[
3\rho_A+5\rho_B=0.
\]
This is precisely the balancing condition adopted in the main text.

Combining this with the gauge condition
\[
3\phi_A+5\phi_B=0,
\]
one uniquely obtains
\[
\phi_A=\frac58\,(\phi_A-\phi_B),
\qquad
\phi_B=-\frac38\,(\phi_A-\phi_B).
\]
From
\[
5(\phi_A-\phi_B)=\kappa\rho_A,
\]
it follows that
\[
\phi_A-\phi_B=\frac{\kappa\rho_A}{5}.
\]
Therefore,
\[
\phi_A=\frac{\kappa\rho_A}{8},\qquad
\phi_B=-\frac{3\kappa\rho_A}{40}.
\]
If, as in the main text, one sets
\[
\rho_A=5,\qquad \rho_B=-3,
\]
this further reduces to
\[
\phi_A=\frac58\kappa,\qquad \phi_B=-\frac38\kappa.
\]
These are exactly the expressions used in the main text.

\section{Proof of Theorem 5.1} \label{app:theoremkkk}
\begin{proof}
By equation \ref{eqn:iienagg},
\[
q_t(x)=\frac{\widetilde q_t(x)}{Z_t},
\qquad
\widetilde q_t(x):=P(x)\exp\!\left(\sigma x\,\xi_t-\frac12\sigma^2x^2t\right).
\]
We first differentiate the unnormalized density $\widetilde q_t(x)$. Define
\[
M_t(x):=\sigma x\,\xi_t-\frac12\sigma^2x^2t.
\]
By It\^o's formula,
\[
dM_t(x)=\sigma x\,d\xi_t-\frac12\sigma^2x^2dt,
\qquad
(dM_t(x))^2=\sigma^2x^2(d\xi_t)^2=\sigma^2x^2dt.
\]
Therefore,
\[
d e^{M_t(x)}
=
e^{M_t(x)}
\left(
dM_t(x)+\frac12(dM_t(x))^2
\right)
=
e^{M_t(x)}\sigma x\,d\xi_t.
\]
Hence
\[
d\widetilde q_t(x)=\sigma x\,\widetilde q_t(x)\,d\xi_t.
\]

Now consider the normalizing factor
\[
Z_t=\int_{\mathbb R}\widetilde q_t(y)\,dy.
\]
Integrating the above expression yields
\[
dZ_t
=
\int_{\mathbb R}d\widetilde q_t(y)\,dy
=
\sigma\int_{\mathbb R}y\,\widetilde q_t(y)\,dy\,d\xi_t
=
\sigma m_t Z_t\,d\xi_t.
\]
Thus,
\[
\frac{dZ_t}{Z_t}=\sigma m_t\,d\xi_t.
\]

Now apply the product rule and It\^o's formula to
\[
q_t(x)=\widetilde q_t(x)Z_t^{-1}.
\]
First,
\[
d(Z_t^{-1})
=
-Z_t^{-2}dZ_t+Z_t^{-3}(dZ_t)^2.
\]
Since
\[
dZ_t=\sigma m_t Z_t\,d\xi_t,
\]
we obtain
\[
d(Z_t^{-1})
=
-\sigma m_t Z_t^{-1}d\xi_t+\sigma^2m_t^2 Z_t^{-1}dt.
\]
At the same time,
\[
d\widetilde q_t(x)\,d(Z_t^{-1})
=
\bigl(\sigma x\widetilde q_t(x)d\xi_t\bigr)
\bigl(-\sigma m_t Z_t^{-1}d\xi_t\bigr)
=
-\sigma^2xm_t q_t(x)dt.
\]
Therefore,
\[
dq_t(x)
=
Z_t^{-1}d\widetilde q_t(x)
+
\widetilde q_t(x)d(Z_t^{-1})
+
d\widetilde q_t(x)d(Z_t^{-1}).
\]
Substituting the above expressions gives
\[
dq_t(x)
=
\sigma x q_t(x)d\xi_t
-\sigma m_t q_t(x)d\xi_t
+\sigma^2m_t^2 q_t(x)dt
-\sigma^2xm_t q_t(x)dt.
\]
Rearranging,
\[
dq_t(x)
=
\sigma(x-m_t)q_t(x)d\xi_t
-\sigma^2m_t(x-m_t)q_t(x)dt.
\]
Using
\[
d\xi_t=\sigma m_tdt+dW_t^Q,
\]
we obtain
\[
dq_t(x)
=
\sigma(x-m_t)q_t(x)(\sigma m_tdt+dW_t^Q)
-\sigma^2m_t(x-m_t)q_t(x)dt.
\]
The $dt$ terms cancel exactly, leaving
\[
dq_t(x)=\sigma(x-m_t)q_t(x)dW_t^Q.
\]
This proves the first statement.

Now apply It\^o's formula to $\log q_t(x)$:
\[
d\log q_t(x)
=
\frac{dq_t(x)}{q_t(x)}
-\frac12\frac{(dq_t(x))^2}{q_t(x)^2}.
\]
Since
\[
dq_t(x)=\sigma(x-m_t)q_t(x)dW_t^Q,
\]
it follows that
\[
\frac{dq_t(x)}{q_t(x)}=\sigma(x-m_t)dW_t^Q,
\qquad
\frac{(dq_t(x))^2}{q_t(x)^2}=\sigma^2(x-m_t)^2dt.
\]
Therefore,
\[
d\log q_t(x)
=
\sigma(x-m_t)dW_t^Q
-\frac12\sigma^2(x-m_t)^2dt.
\]
The theorem is proved.
\end{proof}

\end{document}